\newtheorem{remark}{Remark}
\newtheorem{theorem}{Theorem}
\newtheorem{lemma}[theorem]{Lemma}
\newcommand\blfootnote[1]{%
  \begingroup
  \renewcommand\thefootnote{}\footnote{#1}%
  \addtocounter{footnote}{-1}%
  \endgroup
}
\title{Approximations for the Weighted Reversal, Transposition, and Indel Distance Problem with Intergenic Region Information}
\author{Gabriel Siqueira\inst{1}
, Alexsandro Oliveira Alexandrino\inst{1}
, Zanoni Dias\inst{1}
}
\address{Instituto de Computação, Universidade Estadual de Campinas (Unicamp),\\ Campinas, Brazil\\
\email{\{gabriel.siqueira,alexsandro,zanoni\}@ic.unicamp.br}}
\begin{document} 

\maketitle

\begin{abstract}
Genome rearrangement distances are an established method in genome comparison. Works in this area may include various rearrangement operations representing large-scale mutations, gene orientation information, the number of nucleotides in intergenic regions, and weights reflecting the expected frequency of each operation. In this article, we model genomes containing at most one copy of each gene by considering gene sequences, with orientations, and representing intergenic regions according to their nucleotide lengths. We looked at a problem called Weighted Reversal, Transposition, and Indel Distance, which seeks the minimal cost sequence composed by the rearrangement operations of reversals, transposition, and indels, capable of transforming one genome into another. We leverage a structure called Labeled Intergenic Breakpoint Graph to show an algorithm for that problem with guaranteed approximations considering some sets of weights for the operations. \blfootnote{This work appeared in the Proceedings of the XVIII Brazilian Symposium on Bioinformatics (BSB'2025).} 
\end{abstract}

\section{Introduction}

In many scenarios of Computational Biology it is important to have some measure for the difference between two genomes considering some biological basis. If we are using genetic mutations, one such measure is the Genome Rearrangement distance, which seeks the shortest sequence of genome rearrangements (mutations that affect a large sequence of genetic material) capable of transforming one genome into the other. 

Two common rearrangements that are normally adopted in rearrangement problems are reversals, which invert a sequence of genetic material, and transpositions, which exchange two consecutive sequences of genetic material. Those two operations are called conservative, because they only change the order and orientation of sequences of the genetic material. There are also non-conservative operations, such as the indel, which inserts or removes a sequence of genetic material.

In this work, we assume that each gene can have at most one copy. With these restrictions and considering only the sequence of genes, each with an associated orientation, we have the following known results:
\begin{itemize}
    \item If we are only considering reversals, finding the rearrangement distance can be done in polynomial time~\cite{1999-hannenhalli-pevzner}.
    \item If we are considering both operations of reversals and transpositions, finding the rearrangement distance is NP-hard~\cite{2019b-oliveira-etal} and the best known approximation is $2$~\cite{1998-walter-etal}.
    \item If we are considering both operations of reversals and indels, finding the rearrangement distance can be done in polynomial time~\cite{2021-willing-etal}.
    \item If we are considering the operations of reversals, transpositions, and indels, finding the rearrangement distance is NP-hard and the best known approximation is $2$~\cite{2022a-alexandrino-etal,2023-alexandrino-etal}.
\end{itemize}

Beyond using the sequence of genes, some works in genome rearrangements also consider information presented in intergenic regions (normally represented by the number of nucleotides contained therein). The following results consider this additional information:
\begin{itemize}
    \item If we are only considering reversals, finding the rearrangement distance is NP-hard and the best known approximation is $2$~\cite{2021b-oliveira-etal}.
    \item If we are considering both operations of reversals and transpositions, finding the rearrangement distance is NP-hard and the best known approximation is $3$~\cite{2021a-oliveira-etal}.
    \item If we are considering both operations of reversals and indels, finding the rearrangement distance has unknown complexity and the best known approximation is $2.5$~\cite{2022b-alexandrino-etal}.
    \item If we are considering the operations of reversals, transpositions, and indels, finding the rearrangement distance is NP-hard and the best known approximation is $4$~\cite{2023b-alexandrino-etal}.
\end{itemize}

Another important variant of the genome rearrangement problems considers distinct weights for each operation. This is important to overcome a tendency the algorithms have to prefer transpositions, which may be a problem in populations where another rearrangement is more likely to occur~\cite{1996-blanchette-etal}. For instance, if reversals have a weight of $2$ and transpositions have a weight of $3$ there is an $1.5$-approximation algorithm for the rearrangement problem that does not consider intergenic regions~\cite{2019a-oliveira-etal} (if the ratio between the cost of reversals and transpositions is between 1 and 3 the approximation is lower or equal to 2 in this algorithm). In 2020, there was a study of weighted operations considering intergenic regions~\cite{2020a-brito-etal} with the description of approximations for different weight functions considering reversal, transpositions and indels, but the indels were only allowed to affect intergenic regions.

In this work, we take another step in the study of genome rearrangement problems with weighted operations by considering weights for the operations of reversal, transposition, and indel, such that the indels may also affect genes. For that goal, we apply some known results from the literature and prove the necessary facts to develop an algorithm and ensure its approximation factor.

In Section~\ref{sec:def}, we provide some necessary definitions and a formal description of the problem. Afterward, in Section~\ref{sec:bg}, we present the Labeled Intergenic Breakpoint Graph. Next, in Section~\ref{sec:algo} we describe an approximation algorithm for the rearrangement problem considering intergenic regions and different weights for the operations of reversal, transposition, and indel. Finally, Section~\ref{sec:conc} concludes the paper.

\section{Definitions}\label{sec:def}

A genome $\mathcal{G} = (S,\breve{S})$ is represented by a string $S$, with size $n$, and a list of integers $\breve{S}$, with size $n+1$. For $1 \le i \le n$, the element $S_i$, in the $i$-th position of $S$, represents the $i$-th gene of $\mathcal{S}$ and the $i$-th integer $\breve{S}_i$ of $\breve{S}$ represents the number of nucleotides  (size of the intergenic region) between $S_{i-1}$ and $S_{i}$, or before $S_1$ if $i=0$. There is also an integer $\breve{S}_{n+1}$ representing the size of the intergenic region at the end of the genome. The alphabet $\Sigma_S$ denotes the set of characters from $S$.  We also encode gene orientation by assigning a $+$ or $-$ sign to each element of $S$.

Given a genome $\mathcal{G} = (S,\breve{S})$, the \emph{reversal} $\rho^{(i,j)}_{(x,y)}$, with $1 \le i \le j \le n$, $0 \le x \le \breve{S}_{i}$, and $0 \le y \le \breve{S}_{j+1}$ is an operation that inverts a segment of $\mathcal{G}$. This segment starts in the $x$-th nucleotide of the intergenic region $\breve{S}_i$ and ends in the $y$-th nucleotide of the intergenic region $\breve{S}_{j+1}$. The order and orientation of the genes and the order of the intergenic regions on this segment are inverted, resulting in a new genome $\mathcal{G}_1 \cdot \rho^{(i,j)}_{(x,y)} = (S',\breve{S}')$. Next, we show the application of this operation:
\begin{align*}
    S &= (S_1~S_2~\ldots~\underline{S_i~S_{i+1}~\ldots~S_j}~\ldots S_{n-1}~S_n)\\
    \breve{S} &= (\breve{S}_1~\breve{S}_2~\ldots~(x\underline{+x')~\breve{S}_{i+1}~\breve{S}_{i+2}~\ldots~\breve{S}_j~(y+}y')~\ldots \breve{S}_{n}~\breve{S}_{n+1})\\
    S' &= (S_1~S_2~\ldots~\underline{-S_j~\ldots~-S_{i+1}~-S_i}~\ldots S_{n-1}~S_n)\\
    \breve{S} &= (\breve{S}_1~\breve{S}_2~\ldots~(x\underline{+y)~\breve{S}_{j}~\ldots~\breve{S}_{i+2}~\breve{S}_{i+1}~(x'+}y')~\ldots \breve{S}_{n}~\breve{S}_{n+1})\\
    x'&= \breve{S}_i - x,~~ y' = \breve{S}_{j+1} - y
\end{align*}

Given a genome $\mathcal{G} = (S,\breve{S})$, the \emph{transposition} $\tau^{(i,j,k)}_{(x,y,z)}$, with $1 \le i < j < k \le n+1$, $0 \le x \le \breve{S}_{i}$, $0 \le y \le \breve{S}_{j}$, and $0 \le z \le \breve{S}_{k}$, is an operation that exchanges two segments of $\mathcal{G}$. The first segment starts in the $x$-th nucleotide of the intergenic region $\breve{S}_i$ and ends in the $y$-th nucleotide of the intergenic region $\breve{S}_j$. The second segment starts in the $y$-th nucleotide of the intergenic region $\breve{S}_j$ and ends in the $z$-th nucleotide of the intergenic region $\breve{S}_k$. The genes and intergenic regions of these segments are swapped, resulting in a new genome $\mathcal{G} \cdot \tau^{(i,j,k)}_{(x,y,z)} = (S',\breve{S}')$. Next, we show the application of this operation:
\begin{align*}
    S &= (S_1~\ldots~\underline{S_i~\ldots~S_{j-1}}~\underline{S_j~\ldots~S_{k-1}}~\ldots~S_n)\\
    \breve{S} &= (\breve{S}_1~\ldots~(x\underline{+x')~\breve{S}_{i+1}~\ldots~\breve{S}_{j-1}~(y}~\underline{+y')\breve{S}_{j+1}~\ldots~\breve{S}_{k-1}~(z}+z')~\ldots~\breve{S}_{n+1})\\
    S' &= (S_1~\ldots~\underline{S_j~\ldots~S_{k-1}}~\underline{S_i~\ldots~S_{j-1}}~\ldots~S_n)\\
    \breve{S'} &= (\breve{S}_1~\ldots~(x\underline{+y')\breve{S}_{j+1}~\ldots~\breve{S}_{k-1}~(z}~\underline{+x')~\breve{S}_{i+1}~\ldots~\breve{S}_{j-1}~(y}+z')~\ldots~\breve{S}_{n+1})\\
    x'&= \breve{S}_i - x,~~ y' = \breve{S}_j - y,~~ z' = \breve{S}_k - z
\end{align*}

Given a genome $\mathcal{G} = (S,\breve{S})$, a sequence of genes (characters) $A$, with size $|A|$, and a sequence of intergenic regions (integers) $\breve{A}$, with size $|A| + 1$, the \emph{insertion} $\phi^{(i,A,\breve{A})}_{(x)}$, with $1 \le i \le n$ and $0 \le x \le \breve{S}_{i}$, is an operation that inserts the sequences $A$ and $\breve{A}$ in $\mathcal{G}$. The insertion occurs after the $x$-th nucleotide of the intergenic region $\breve{S}_i$. After the insertion, we have a new genome $\mathcal{G} \cdot \phi^{(i,A,\breve{A})}_{(x)}$. Note that, if $A$ is empty, the insertion only adds nucleotides in the intergenic region $\breve{S}_i$. Next, we show the application of this operation:
\begin{align*}
    S &= (S_1~S_2~\ldots~S_i~S_{i+1}~\ldots S_{n-1}~S_n)\\
    \breve{S} &= (\breve{S}_1~\breve{S}_2~\ldots~\breve{S}_{i-1}~(x+x')~\breve{S}_{i+1}~\ldots \breve{S}_{n}~\breve{S}_{n+1})\\
    S' &= (S_1~S_2~\ldots~S_i~\underline{A_1~A_2~\ldots~A_{|A|}}~S_{i+1}~\ldots S_{n-1}~S_n)\\
    \breve{S'} &= (\breve{S}_1~\breve{S}_2~\ldots~\breve{S}_{i-1}~(x+\underline{\breve{A}_1)~\breve{A_2}~\ldots~(\breve{A}_{|A|+1}+}x')~\breve{S}_{i+1}~\ldots \breve{S}_{n}~\breve{S}_{n+1})\\
    x'&= \breve{S}_i - x
\end{align*}

Given a genome $\mathcal{G} = (S,\breve{S})$, the \emph{deletion} $\psi^{(i,j)}_{(x,y)}$, with $1 \le i \le j \le n + 1$, $0 \le x \le \breve{S}_{i}$, and $0 \le y \le \breve{S}_{j}$, is an operation that removes a segment of $\mathcal{G}$. This segment starts in the $x$-th nucleotide of the intergenic region $\breve{S}_i$ and ends in the $y$-th nucleotide of the intergenic region $\breve{S}_{j}$. After the deletion, we have a new genome $\mathcal{G} \cdot \psi^{(i,j)}_{(x,y)}$. Note that, if $i=j$, the deletion only removes nucleotides in the intergenic region $\breve{S}_i$. In that case, we require that $0 \le x \le y \le \breve{S}_j$. Next, we show the application of this operation:
\begin{align*}
    S &= (S_1~S_2~\ldots~S_{i-1}~\underline{S_i~S_{i+1}~\ldots~S_{j-1}}~S_j~\ldots S_{n-1}~S_n)\\
    \breve{S} &= (\breve{S}_1~\breve{S}_2~\ldots~\breve{S}_{i-1}(x\underline{+x')~\breve{S}_{i+1}~\breve{S}_{i+2}~\ldots~\breve{S}_{j-1}~(y+}y')~\breve{S}_{j+1}~\ldots \breve{S}_{n}~\breve{S}_{n+1})\\
    S &= (S_1~S_2~\ldots~S_{i-1}~S_j~\ldots S_{n-1}~S_n)\\
    \breve{S} &= (\breve{S}_1~\breve{S}_2~\ldots~\breve{S}_{i-1}(x+y')~\breve{S}_{j+1}~\ldots \breve{S}_{n}~\breve{S}_{n+1})\\
    x'&= \breve{S}_i - x,~~ y' = \breve{S}_{j} - y
\end{align*}

When we want to discuss an operation that can be either an insertion or a deletion, we use the term indel. Given a genome $\mathcal{G}$ and a sequence of operations $R = (\beta_1,\beta_2,\ldots,\beta_{|R|})$, where each $\beta_i$ can be a reversal, a transposition, or an indel, we denote by $\mathcal{G} \cdot R$ the genome $\mathcal{G} \cdot \beta_1 \cdot \beta_2 \cdot \ldots \cdot \beta_{|R|}$, which is the result of applying in order each operation of R to $\mathcal{G}$.

We are interested in comparing two genomes $\mathcal{G}_1$ and $\mathcal{G}_2$ that may have a distinct set of genes. So we need to insert or delete characters from the genomes. If we are allowed to insert or remove any segment of the genome then the problem becomes trivial, just remove the whole genome $\mathcal{G}_1$ and insert the genome $\mathcal{G}_2$. Consequently, we assume that only genes that appear exclusively in $\mathcal{G}_1$ can be deleted and only genes that appear exclusively in $\mathcal{G}_2$ can be inserted. However, we do not restrict which intergenic regions can be inserted or deleted.

Considering that each gene can have at most one copy, we use the following representation for a pair of genomes $\mathcal{G}_1$ and $\mathcal{G}_2$. The genome $\mathcal{G}_2$, with $m$ genes, is represented by the identity string $\iota = (+1~+2~\ldots~+m)$, with all elements positive, and by the integer list $\breve{\iota}$. The genome $\mathcal{G}_1$, with $n$ genes, is represented by the string $S$ and integer list $\breve{S}$, such that each character of $S$ that corresponds to a gene present in $\mathcal{G}_2$ receives the same number used in $\iota$, and the other characters are represented by the symbol $\alpha$, because they need to be deleted and therefore do not need to be distinguished from one another. As we can delete or insert multiple consecutive characters and intergenic regions, we can combine any consecutive sequence of characters that appear only in $\iota$ or only in $S$ (the intergenic regions between those characters are removed from $\breve{\iota}$ and $\breve{S}$). Consequently, we assume that there are no two consecutive characters in $\iota$ that appear only in $\iota$, and there are no two consecutive characters $\alpha$ in $S$.

We want to attribute a weight to each operation (reversal, transposition, or indel). We denote the weight of a rearrangement $\beta$ by $W(\beta)$, and we denote the sum of the weights of all rearrangements of a sequence $R$ by $W(R)$. In the Intergenic Weighted Reversal, Transposition, and Indel Distance (IWRTID) problem, we are interested in a sequence of operations $R$ that turns $\mathcal{G}_1$ into $\mathcal{G}_2$ such that the value of $W(R)$ is minimized. This problem is NP-hard even if all operations have the same weight~\cite{2023b-alexandrino-etal}.

\section{Labeled Intergenic Breakpoint Graph}\label{sec:bg}

To develop the new algorithms, we will use the Labeled Intergenic Breakpoint Graph presented in a previous study~\cite{2021b-alexandrino-etal}. To define this graph, we include two characters $S_0 = \iota_0 = 0$ and $S_{n +1} = \iota_{m+1} = m+1$ at the beginning and the end of the genomes $\mathcal{G}_1$ and $\mathcal{G}_2$. Let $\pi^S$ be the string $S$ without the characters $\alpha$. We denote by $\Sigma_{\pi^S}$ the set of characters in $\pi^S$ and by $\Sigma_{\iota}$ the set of characters in $\iota$. We also defined $next(x,\Sigma_{\pi^S} \cap \Sigma_{\iota}) = min(y \in \Sigma_{\pi^S} \cap \Sigma_{\iota} | y > x)$, for $x \in \Sigma_{\pi^S} \setminus \{m+1\}$.

Given two genomes $\mathcal{G}_1 = (S,\breve{S})$ and $\mathcal{G}_2 = (\iota,\breve{\iota})$, the \emph{Labeled Intergenic Breakpoint Graph} $G(\mathcal{G}_1,\mathcal{G}_2) = (V,E,w,\ell)$ is a graph with the vertex set $V$, the edge set $E$, the weight function $w: E \to \mathbb{N}$, and the label function $\ell: E \to (\Sigma_{\iota} \setminus \Sigma_{\pi^S}) \cup \{\alpha\}$. The set $V$ has a vertex $+\pi^S_0$, a vertex $-\pi^S_{n+1}$, and, for each character $\pi^S_i, 1 \le i \le n$, two vertices $-\pi^S_i$ and $+\pi^S_i$. The edge set $E$ is separated into the set $E^o$ of origin edges and the set $E^t$ of target edges. For each $1 \le i \le n + 1$, there is an origin edge $o_i=(+\pi^S_{i-1},-\pi^S_i)$ with weight $w(o_i)$ equal to the sum of intergenic regions in $\breve{S}$ between the elements $\pi^S_{i-1}$ and $\pi^S_{i}$. The label $\ell(o_i)$ is equal to $\alpha$, if there is one $\alpha$ between $\pi^S_{i-1}$ and $\pi^S_{i}$, or is empty otherwise. For every $x \in \Sigma_{\pi^S} \setminus \{m+1\}$, there is a target edge $t_x = (+x,-next(x,\Sigma_{\pi^S} \cap \Sigma_{\iota}))$ with weight $w(t_x)$ equal to the sum of the intergenic regions in $\breve{\iota}$ between $x$ and $next(x,\Sigma_{\pi^S} \cap \Sigma_{\iota})$. The label $\ell(t_x)$ is equal to $x+1$, if $x+1 \ne next(x,\Sigma_{\pi^S} \cap \Sigma_{\iota})$, or is empty otherwise.

Each vertex from $V$ has only one incident target edge and one incident origin edge. Consequently, there is a unique decomposition of $G(\mathcal{G}_1,\mathcal{G}_2)$ into alternating cycles (cycles formed by interleave target and origin edges). One such cycle is called \emph{trivial} if it has only one target and one origin edge, and it is called \emph{non-trivial} otherwise. If some edge of a cycle has a label the cycle is called \emph{labeled}, and it is called \emph{clean} otherwise. A cycle $C$ is \emph{balanced} if the sum of the weights of the origin edges is equal to the sum of the weights of the target edges, and it is \emph{unbalanced} otherwise. An unbalanced cycle is \emph{positive} if the sum of the weights of the origin edges is lower than the sum of the weights of the target edges, and it is \emph{negative} otherwise. If a cycle is balanced and clean it is called \emph{good}, and it is called \emph{bad} otherwise (if it is unbalanced or labeled).

We use the following standard drawing of the graph $G(\mathcal{G}_1,\mathcal{G}_2)$, which will help in the following definitions. We place all vertices aligned in the order $+\pi^S_0$, $-\pi^S_1$, $+\pi^S_1$, $\ldots$, $-\pi^S_{n}$, $+\pi^S_{n}$, $-\pi^S_{n+1}$. The origin edges are shown as horizontal lines. Note that these edges always connect two neighbor vertices, considering the order in which they are drawn. The target edges are shown as arcs above the vertices. Figure~\ref{fig:bg} shows an example of the standard drawing of a breakpoint graph. In this graph, the black cycle is trivial, bad, labeled, and balanced, the red cycle is good, the blue cycle is bad, labeled, and positive, and the cyan cycle is bad, labeled, and negative.

\begin{figure}[tb]
\centering
\resizebox{\textwidth}{!}{
\begin{tikzpicture}[scale=0.6]
\scriptsize

\begin{scope}[every node/.style={inner sep=1.5pt, minimum size = 0pt}]
    \node[circle, draw] (p0) at (0,0) {$~+0$};
    \node[circle, draw] (m1) at (1.5,0) {$~-1$};
    \node[circle, draw] (p1) at (3,0) {$~+1$};
    \node[circle, draw] (p8) at (4.5,0) {$~+8$};
    \node[circle, draw] (m8) at (6,0) {$~-8$};
    \node[circle, draw] (m4) at (7.5,0) {$~-4$};
    \node[circle, draw] (p4) at (9,0) {$~+4$};
    \node[circle, draw] (m2) at (10.5,0) {$~-2$};
    \node[circle, draw] (p2) at (12.0,0) {$~+2$};
    \node[circle, draw] (p7) at (13.5,0) {$~+7$};
    \node[circle, draw] (m7) at (15.0,0) {$~-7$};
    \node[circle, draw] (p9) at (16.5,0) {$~+9$};
    \node[circle, draw] (m9) at (18.0,0) {$~-9$};
    \node[circle, draw] (m5) at (19.5,0) {$~-5$};
    \node[circle, draw] (p5) at (21.0,0) {$~+5$};
    \node[circle, draw] (m10) at (22.5,0) {$-10$};

\end{scope}

\begin{scope}[>={Stealth[black]},
              every edge/.style={draw=black}]
    \path [-] (p0) edge [black] node [black, pos=0.5, sloped, above, yshift=0.00cm] {$3$} (m1);
    \path [-] (p0) edge [black] node [black, pos=0.5, sloped, below, yshift=0.00cm] {\color{gray} $\alpha$} (m1);
    \path [-] (p1) edge [red] node [black, pos=0.5, sloped, above, yshift=0.00cm] {$2$} (p8);
    \path [-] (m8) edge [blue] node [black, pos=0.5, sloped, above, yshift=0.00cm] {$0$} (m4);
    \path [-] (p4) edge [red] node [black, pos=0.5, sloped, above, yshift=0.00cm] {$1$} (m2);
    \path [-] (p2) edge [blue] node [black, pos=0.5, sloped, above, yshift=0.00cm] {$2$} (p7);
    \path [-] (m7) edge [cyan] node [black, pos=0.5, sloped, above, yshift=0.00cm] {$2$} (p9);
    \path [-] (m9) edge [red] node [black, pos=0.5, sloped, above, yshift=0.00cm] {$3$} (m5);
    \path [-] (p5) edge [cyan] node [black, pos=0.5, sloped, above, yshift=0.00cm] {$2$} (m10);
\end{scope}

\begin{scope}[>={Stealth[black]},
              every edge/.style={draw=black}]
    \path [-] (p0) edge  [black,bend left=55, dashed] node [black, pos=0.5, above] {$3$} (m1);
    \path [-] (p1) edge [red,bend left=80, dashed] node [black, pos=0.5, above] {$0$} (m2);
    \path [-] (p2) edge  [blue,bend right=55, dashed] node [black, pos=0.5, above] {$2$} (m4);
    \path [-] (p2) edge  [blue,bend right=55, dashed] node [black, pos=0.5, below] {\color{gray} $3$} (m4);
    \path [-] (p4) edge  [red,bend left=55, dashed] node [black, pos=0.5, above] {$3$} (m5);
    \path [-] (p5) edge  [cyan,bend right=55, dashed] node [black, pos=0.5, above] {$2$} (m7);
    \path [-] (p5) edge  [cyan,bend right=55, dashed] node [black, pos=0.5, below] {\color{gray} $6$} (m7);
    \path [-] (p7) edge  [blue,bend right=55, dashed] node [black, pos=0.5, above] {$1$} (m8);
    \path [-] (p8) edge  [red,bend left=55, dashed] node [black, pos=0.5, above] {$3$} (m9);
    \path [-] (p9) edge  [cyan,bend left=55, dashed] node [black, pos=0.5, above] {$1$} (m10);
\end{scope}

\end{tikzpicture}
}
\caption{Standard drawing of the graph $G(\mathcal{G}_1,\mathcal{G}_2)$ created from two genomes $\mathcal{G}_1 = (({+0}~{+\alpha}~{+1}~{-8}~{+4}~{+2}~{-7}~{-9}~{+5}~{+10}), ({1}~{2}~{2}~{0}~{1}~{2}~{2}~{3}~{2}))$ and $\mathcal{G}_2 = (({+0}~{+1}~{+2}~{+3}~{+4}~{+5}~{+6}~{+7}~{+8}~{+9}~{+10}), ({3}~{0}~{1}~{1}~{3}~{0}~{2}~{1}~{3}~{1}))$. Each color represents a different cycle of the cycle decomposition of $G(\mathcal{G}_1,\mathcal{G}_2)$. The weights (representing the intergenic regions) are indicated above each edge, and the labels (encoding the genes exclusive to one of the genomes) are indicated below each edge.}
\label{fig:bg}
\end{figure}
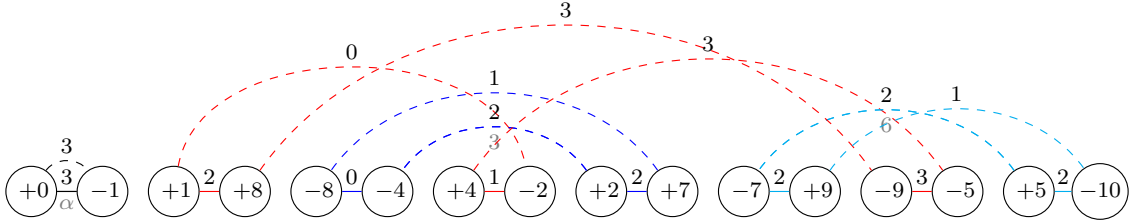

Considering the standard drawing, when traversing a cycle, starting from the rightmost vertices being traversed from right to left starting always using an origin edge, we may traverse an origin edge from right to left or from left to right. If all origin edges of cycles are traversed from right to left, we call this cycle \emph{convergent}, and we call it \emph{divergent} otherwise. Given a non-trivial convergent cycle $C$. If the edges are traversed in order from right to left, then the cycles $C$ is called \emph{non-oriented}, and it is called \emph{oriented} otherwise. In Figure~\ref{fig:bg}, the cyan cycle is divergent, the red cycle is convergent and oriented, and the blue cycle is convergent and non-oriented.

We will use $c(\mathcal{G}_1,\mathcal{G}_2)$ and $c_g(\mathcal{G}_1,\mathcal{G}_2)$ to denote the number of cycles and the number of good cycles in $G(\mathcal{G}_1,\mathcal{G}_2)$, respectively. For a rearrangement sequence $R$, let $\pi^S \cdot \beta$ be the resulting $\pi^S$ after the application of the sequence $R$ in $\mathcal{G}_1 = (S,\breve{S})$. We use $\Delta c(\mathcal{G}_1,\mathcal{G}_2,\beta) = (|\pi^S| + 1 - c(\mathcal{G}_1,\mathcal{G}_2)) - (|\pi^S \cdot R| + 1 - c(\mathcal{G}_1 \cdot R,\mathcal{G}_2))$ and $\Delta c_g(\mathcal{G}_1,\mathcal{G}_2,\beta) = (|\pi^S| + 1 - c_g(\mathcal{G}_1,\mathcal{G}_2)) - (|\pi^S \cdot R| + 1 - c_g(\mathcal{G}_1 \cdot R,\mathcal{G}_2))$ to describe the effect of this rearrangement on the number of cycles and vertices in the graph.

Note that any distinction in the order of the genes of $\mathcal{G}_1$ and $\mathcal{G}_2$ will produce non-trivial cycles in $G(\mathcal{G}_1,\mathcal{G}_2)$, any difference in the size of the intergenic regions in $\mathcal{G}_1$ and $\mathcal{G}_2$ will produce unbalanced cycles in $G(\mathcal{G}_1,\mathcal{G}_2)$, and any difference in the gene content of $\mathcal{G}_1$ and $\mathcal{G}_2$ will produce labeled cycles in $G(\mathcal{G}_1,\mathcal{G}_2)$. This leads us to the following remark.

\vspace{8pt}
\begin{remark}\label{remark:cyc}
$G(\mathcal{G}_1,\mathcal{G}_2)$ has only trivial good cycles if and only if $\mathcal{G}_1 = \mathcal{G}_2$.
\end{remark}

Note that any operation of reversal and transposition applied to $\mathcal{G}_1$ can be seen as an operation that affects the graph $G(\mathcal{G}_1,\mathcal{G}_2)$, changing the order of the vertices and replacing some origin edges according to the new order. The weights and labels of the new edges will be attributed according to the parameters of the operation and weights and labels of the edges that have been removed. Similarly, a deletion or insertion can be seen as changes to the weights and labels of the origin edges, with the possible inclusion of some vertices in the case of insertion. Note that an insertion that only affects intergenic regions corresponds to adding weight to one of the origin edges. Considering the convenience of this type of operation, we also define the operation of adding weight to one of the target edges with a clean label. This is a virtual insertion because there is no rearrangement operation directly correspondent to it, but the following lemma allows us to use it. Figure~\ref{fig:op} shows an example of the use of this lemma.

\vspace{8pt}
\begin{lemma}
If a sequence $R$ of rearrangement operations (including virtual insertions in target edges with clean label) can transform $G(\mathcal{G}_1,\mathcal{G}_2)$ into a graph with only good cycles, then there is a sequence $R'$ with at most as many rearrangements as $R$ that transforms $\mathcal{G}_1$ into $\mathcal{G}_2$.
\end{lemma}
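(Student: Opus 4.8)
\noindent The plan is to convert $R$ into $R'$ by eliminating its virtual insertions and paying for them with a few ordinary intergenic deletions, never exceeding the original length. I would use the hypothesis in the form: $R$ turns $G(\mathcal{G}_1,\mathcal{G}_2)$ into a graph all of whose cycles are trivial and good, which by Remark~\ref{remark:cyc} is $G(\mathcal{G}_2,\mathcal{G}_2)$, so that producing that graph from $\mathcal{G}_1$ through genuine genome operations is the same as reaching $\mathcal{G}_2$. Viewed as graph operations, the members of $R$ are of five kinds: reversals, transpositions, insertions that add weight to an origin edge and/or create real-gene vertices, deletions that remove weight from an origin edge and/or remove $\alpha$-vertices, and virtual insertions on clean-label target edges. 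By the correspondence recalled just before the lemma (from~\cite{2021b-alexandrino-etal}), every kind except the virtual insertion is realized by an actual rearrangement of the evolving genome with the same parameters. So I would delete from $R$ all of its virtual insertions, run the remaining real subsequence $\widetilde R$ on $\mathcal{G}_1$ to obtain a genome $\mathcal{G}^{\ast}$, and then append a handful of intergenic deletions to repair the discrepancy between $\mathcal{G}^{\ast}$ and $\mathcal{G}_2$.

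The first thing I would prove is that $\widetilde R$ is still a valid rearrangement sequence whose effect on the breakpoint graph equals that of $R$ except on the weights of clean target edges. This rests on two structural facts. (i) A clean target edge $t_x$ — one with $next(x,\Sigma_{\pi^S}\cap\Sigma_{\iota})=x+1$ — is never created, destroyed, split, or merged by any operation: reversals and transpositions only reorganize origin edges; a deletion removes only $\alpha$-characters and origin-edge weight, leaving $\pi^S$ and hence every target edge untouched; and an insertion could split $t_x$ only by placing a new real gene strictly between $x$ and $x+1$, which is impossible. (ii) The validity of a reversal, transposition, insertion, or deletion, and its effect on the graph, depends only on the $\breve S$-values (origin-edge data), never on the weight of a target edge. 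Given (i) and (ii), erasing the virtual insertions leaves every remaining operation legal and acting exactly as before, so the outcome of $\widetilde R$ is the hypothesized final graph $G^{\mathrm{fin}}$ with the weight of each clean target edge decreased by the total virtual weight it had received; and that graph is exactly $G(\mathcal{G}^{\ast},\mathcal{G}_2)$.

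For the repair step, write $a_t$ for the weight a clean target edge $t$ has in $G(\mathcal{G}_1,\mathcal{G}_2)$ — by fact (i) this is also its weight in any graph reached using only real operations — and let $T$ be the set of clean target edges that received positive total virtual weight $\delta_t$ over the course of $R$; then $|T|$ is at most the number $k$ of virtual insertions in $R$, since each such insertion charges one element of $T$. In $G^{\mathrm{fin}}$ every cycle is trivial and balanced, so the trivial cycle through $t\in T$ has its origin edge of weight $a_t+\delta_t$; in $G(\mathcal{G}^{\ast},\mathcal{G}_2)$ that same origin edge is still present with weight $a_t+\delta_t$, while $t$ has weight only $a_t$, so this cycle is unbalanced by exactly $\delta_t$ and every other cycle is already trivial and good. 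Removing $\delta_t$ intergenic nucleotides from that origin edge's region — a legal deletion, since the region holds $a_t+\delta_t\ge\delta_t$ nucleotides — rebalances it and disturbs no other cycle. Doing this once for each $t\in T$ leaves a graph with only trivial good cycles, i.e., the genome $\mathcal{G}_2$ by Remark~\ref{remark:cyc}. The sequence $R'$ thus built uses $|\widetilde R|+|T|\le (|R|-k)+k=|R|$ operations, as required.

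I expect the main obstacle to be the bookkeeping behind the second paragraph: verifying, case by case against the definitions of Section~\ref{sec:def} and of the Labeled Intergenic Breakpoint Graph, that clean target edges are genuinely stable under reversals, transpositions, insertions, and deletions, and that none of these operations consults the weight of a target edge when checking validity or computing its effect. Once that stability is established, the charging of fix-up deletions to virtual insertions in the last paragraph is immediate and the bound $|R'|\le|R|$ follows with no further work.
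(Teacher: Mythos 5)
Your proposal is correct and follows essentially the same route as the paper's own proof: run the non-virtual operations of $R$ on $\mathcal{G}_1$, observe that the only remaining discrepancy with $\mathcal{G}_2$ is the size of the intergenic regions whose clean target edges received virtual insertions, and append one real deletion per such region, charging each to a virtual insertion so that $|R'|\le|R|$. Your version merely spells out the stability of clean target edges and the validity of the truncated sequence, details the paper leaves implicit.
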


\begin{proof}
After the application of the operations from the sequence $R$ in $G(\mathcal{G}_1,\mathcal{G}_2)$ we have a new graph $G(\mathcal{G}'_1,\mathcal{G}'_2)$, such that $\mathcal{G}'_1 = \mathcal{G}'_2$ (Remark~\ref{remark:cyc}). Note that $\mathcal{G}'_1$ is the result of applying all non-virtual operations in $\mathcal{G}_1$, because the virtual insertion only affects target edges. Besides, the virtual insertion is the only operation that affects target edges with clean labels. Consequently, the only distinction between $\mathcal{G}'_1$ and $\mathcal{G}_2$ is the size of some intergenic regions. We just have to apply deletions in each of these intergenic regions to turn $\mathcal{G}'_1$ into $\mathcal{G}_2$. We can take $R'$ to be the sequence of non-virtual operations of $R$ plus these deletions added to the end. Note that $R'$ has at most as many operations as $R$, because the number of distinct intergenic regions between $\mathcal{G}'_1$ and $\mathcal{G}_2$ is at most the number of virtual insertions in $R$.
\end{proof}

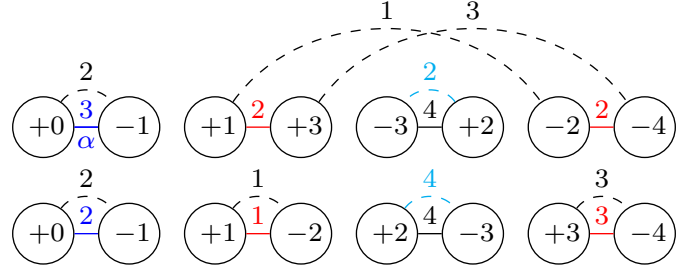
\begin{figure}[tb]
\centering
\resizebox{0.6\textwidth}{!}{
\begin{tikzpicture}[scale=0.6]
\scriptsize

\begin{scope}[every node/.style={inner sep=1.5pt, minimum size = 0pt}]
    \node[circle, draw] (p0) at (0,0) {$~+0$};
    \node[circle, draw] (m1) at (1.5,0) {$~-1$};
    \node[circle, draw] (p1) at (3,0) {$~+1$};
    \node[circle, draw] (p3) at (4.5,0) {$~+3$};
    \node[circle, draw] (m3) at (6,0) {$~-3$};
    \node[circle, draw] (p2) at (7.5,0) {$~+2$};
    \node[circle, draw] (m2) at (9.0,0) {$~-2$};
    \node[circle, draw] (m4) at (10.5,0) {$~-4$};

\end{scope}

\begin{scope}[>={Stealth[black]},
              every edge/.style={draw=black}]
    \path [-] (p0) edge [blue] node [blue, pos=0.5, sloped, above, yshift=0.00cm] {$3$} (m1);
    \path [-] (p0) edge [blue] node [blue, pos=0.5, sloped, below, yshift=0.00cm] {$\alpha$} (m1);
    \path [-] (p1) edge [red] node [red, pos=0.5, sloped, above, yshift=0.00cm] {$2$} (p3);
    \path [-] (m3) edge [black] node [black, pos=0.5, sloped, above, yshift=0.00cm] {$4$} (p2);
    \path [-] (m2) edge [red] node [red, pos=0.5, sloped, above, yshift=0.00cm] {$2$} (m4);
\end{scope}

\begin{scope}[>={Stealth[black]},
              every edge/.style={draw=black}]
    \path [-] (p0) edge  [black,bend left=55, dashed] node [black, pos=0.5, above] {$2$} (m1);
    \path [-] (p1) edge [black,bend left=55, dashed] node [black, pos=0.5, above] {$1$} (m2);
    \path [-] (p2) edge  [cyan,bend right=55, dashed] node [cyan, pos=0.5, above] {$2$} (m3);
    \path [-] (p3) edge  [black,bend left=55, dashed] node [black, pos=0.5, above] {$3$} (m4);
\end{scope}

\end{tikzpicture}
}

\resizebox{0.6\textwidth}{!}{
\begin{tikzpicture}[scale=0.6]
\scriptsize

\begin{scope}[every node/.style={inner sep=1.5pt, minimum size = 0pt}]
    \node[circle, draw] (p0) at (0,0) {$~+0$};
    \node[circle, draw] (m1) at (1.5,0) {$~-1$};
    \node[circle, draw] (p1) at (3,0) {$~+1$};
    \node[circle, draw] (m2) at (4.5,0) {$~-2$};
    \node[circle, draw] (p2) at (6,0) {$~+2$};
    \node[circle, draw] (m3) at (7.5,0) {$~-3$};
    \node[circle, draw] (p3) at (9,0) {$~+3$};
    \node[circle, draw] (m4) at (10.5,0) {$~-4$};

\end{scope}

\begin{scope}[>={Stealth[black]},
              every edge/.style={draw=black}]
    \path [-] (p0) edge [blue] node [blue, pos=0.5, sloped, above, yshift=0.00cm] {$2$} (m1);
    \path [-] (p1) edge [red] node [red, pos=0.5, sloped, above, yshift=0.00cm] {$1$} (m2);
    \path [-] (m3) edge [black] node [black, pos=0.5, sloped, above, yshift=0.00cm] {$4$} (p2);
    \path [-] (p3) edge [red] node [red, pos=0.5, sloped, above, yshift=0.00cm] {$3$} (m4);
\end{scope}

\begin{scope}[>={Stealth[black]},
              every edge/.style={draw=black}]
    \path [-] (p0) edge  [black,bend left=55, dashed] node [black, pos=0.5, above] {$2$} (m1);
    \path [-] (p1) edge [black,bend left=55, dashed] node [black, pos=0.5, above] {$1$} (m2);
    \path [-] (p2) edge  [cyan,bend left=55, dashed] node [cyan, pos=0.5, above] {$4$} (m3);
    \path [-] (p3) edge  [black,bend left=55, dashed] node [black, pos=0.5, above] {$3$} (m4);
\end{scope}

\end{tikzpicture}
}

\caption{Given $\mathcal{G}_1 = (({+0}~{+\alpha}~{+1}~{-3}~{-2}~{+4}), ({1}~{2}~{2}~{4}~{2}))$ and $\mathcal{G}_2 = (({+0}~{+1}~{+2}~{+3}~{+4}), ({2}~{1}~{2}~{3}))$. A sequence of rearrangement operations being applied to the graph $G(\mathcal{G}_1,\mathcal{G}_2)$, and the correspondent sequence applied to the genome $\mathcal{G}_1$ producing the genome $\mathcal{G}_2$. The sequence is composed of: a virtual insertion in the cyan edge correspondent to the deletion $\psi^{(4,4)}_{(0,2)}$; a reversal affecting the red edges correspondent to the reversal $\rho^{(4,5)}_{(1,0)}$; a deletion in the blue edge correspondent to the deletion $\psi^{(1,2)}_{(0,0)}$.}
\label{fig:op}
\end{figure}

\section{Approximation Algorithms}\label{sec:algo}

In this section, we present the approximation algorithm for the IWRTID problem.  First, we establish how the operations can affect the values of $\Delta c$ and $\Delta c_g$.

\vspace{8pt}
\begin{lemma}\cite[Theorem 1]{1996-bafna-pevzner}\label{lemma:lb_r_c}
For any reversal $\rho$, we have $\Delta c(\mathcal{G}_1,\mathcal{G}_2,\rho) \le 1$.
\end{lemma}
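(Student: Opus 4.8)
The plan is to track how a single reversal changes the alternating cycle decomposition of the breakpoint graph. A reversal acts on the standard drawing by cutting two origin edges and reconnecting the four exposed endpoints in the only way consistent with the reversal's geometry. So at the level of edges, a reversal removes exactly two origin edges from $E^o$ and adds exactly two new origin edges, leaving all target edges untouched. The quantity $\Delta c$ measures the change in $|\pi^S| + 1 - c$; since a reversal does not change $|\pi^S|$ (it permutes genes but neither inserts nor deletes any), we have $\Delta c(\mathcal{G}_1,\mathcal{G}_2,\rho) = c(\mathcal{G}_1\cdot\rho,\mathcal{G}_2) - c(\mathcal{G}_1,\mathcal{G}_2)$, so it suffices to show that replacing two origin edges by two new ones can raise the number of cycles by at most one.

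The key case analysis is on whether the two origin edges being removed lie in the same cycle or in two different cycles. If they lie in two different cycles, then after removing them and splicing in the two new origin edges the two cycles necessarily merge into a single cycle, so $c$ decreases by one and $\Delta c = -1 \le 1$. If the two removed origin edges lie in the same cycle $C$, then removing them breaks $C$ into (at most) two alternating paths; reattaching with the two new origin edges either rejoins everything into one cycle or splits it into two cycles, so $c$ increases by at most one, giving $\Delta c \le 1$. The point to make carefully is that a reversal is exactly a ``$2$-break'' on origin edges in the sense that it deletes two and creates two, and that such a local surgery can never split one cycle into three or more or create a new cycle from material in two separate cycles.

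The cleanest way to argue the bound in all cases uniformly is a parity/rank argument: consider the graph on the same vertex set whose edges are the target edges together with the two new origin edges but with the two old origin edges deleted; this graph is a disjoint union of paths and cycles, and adding back two edges (one at a time) to a graph increases its number of connected components that are cycles by at most the number of edges added that close a cycle, i.e.\ at most two — but the first added origin edge, by itself, connects endpoints that were left with degree one, so at the stage where only one of the two new origin edges is present no new cycle has formed unless that edge closed a path into a cycle, and a short check of the four endpoint-pairings allowed by a reversal shows at most one of the two additions closes a cycle while the other necessarily lands inside an already-formed component. I expect the main obstacle to be bookkeeping the endpoint identifications precisely: one has to use the definition of $\rho^{(i,j)}_{(x,y)}$ to see which two origin edges $o_i$-type edges are cut, which four signed vertices are exposed, and how the reversal reglues them, and then confirm that none of the legal regluings produces a net gain of two or more cycles. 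Once that endpoint combinatorics is pinned down, the inequality $\Delta c \le 1$ follows immediately, and this is in fact the classical Bafna--Pevzner argument adapted to the labeled intergenic setting, where labels and weights on edges are irrelevant to the count $c$ and so can be carried along passively.
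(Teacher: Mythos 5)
The paper gives no proof of this lemma at all---it is imported by citation from Bafna and Pevzner---and your argument is essentially the classical one from that source: a reversal acts as a $2$-break on origin edges while leaving target edges and $|\pi^S|$ unchanged, and the case analysis on whether the two cut origin edges lie in one cycle (split into at most two) or in two cycles (forced merge, since each new origin edge joins an endpoint of one cut to an endpoint of the other) gives $\Delta c(\mathcal{G}_1,\mathcal{G}_2,\rho) \le 1$. Your second paragraph already suffices as a proof; the ``uniform'' argument of the third paragraph is dispensable, and its claim that at most one of the two added edges can close a cycle is not literally true (when both cuts lie in the same cycle and the regluing splits it, both additions close cycles), though this does not affect the correctness of the bound.
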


\vspace{8pt}
\begin{lemma}\cite[Lemma 2]{2021b-alexandrino-etal}\label{lemma:lb_r_cg}
For any reversal $\rho$, we have $\Delta c_g(\mathcal{G}_1,\mathcal{G}_2,\rho) \le 1$.
\end{lemma}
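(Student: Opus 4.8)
The plan is to bound $\Delta c_g$ by reducing it to the classical bound $\Delta c \le 1$ for reversals (Lemma~\ref{lemma:lb_r_c}), while carefully accounting for the additional ``good/bad'' bookkeeping introduced by the weight and label functions. Recall that $\Delta c_g(\mathcal{G}_1,\mathcal{G}_2,\rho) = c_g(\mathcal{G}_1\cdot\rho,\mathcal{G}_2) - c_g(\mathcal{G}_1,\mathcal{G}_2)$ after simplification (the $|\pi^S|$ terms cancel because a reversal does not change the gene content or the number of genes), so it suffices to show that a single reversal can increase the number of good (balanced and clean) cycles by at most one.

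First I would recall the structural facts about how a reversal acts on the graph: a reversal $\rho^{(i,j)}_{(x,y)}$ removes exactly two origin edges, say $o$ and $o'$, reorders the vertices between them, and inserts two new origin edges whose total weight equals $w(o)+w(o')$ and whose labels together are the multiset of labels of $o$ and $o'$ (the reversal does not touch target edges). Target edges, and hence the target side of any cycle, are untouched. I would then split into the standard cases by how many cycles the two affected origin edges lie in: (a) both in the same cycle $C$, which may split into two cycles or stay as one; (b) in two different cycles $C_1,C_2$, which merge into one. In case (a) with a split, the number of cycles increases by one, matching $\Delta c = 1$; in case (a) without a split and in case (b), the number of cycles does not increase.

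The key step is to argue that in the only case where the cycle count rises (case (a), split into $C'$ and $C''$), at most one of $C',C''$ can be good. Here I would use a conservation argument: the total weight $w(C') + w(C'')$ on the origin side equals $w(C)$ on the origin side, and the target-side weights are partitioned from $C$'s unchanged target edges; likewise the labels of $C$ are partitioned between $C'$ and $C''$. If $C$ itself was good (balanced and clean), then a nontrivial combinatorial argument about the reversal splitting a balanced, clean cycle shows that the two halves cannot both be balanced unless one of them is already a trivial good cycle that was, in effect, ``carved out'' — and in any event at least one of the two pieces inherits enough of $C$'s structure to be no better than $C$. If $C$ was bad, then since good cycles are never destroyed except by the two affected edges, and only the origin edges $o,o'$ are replaced, at most the two freshly created origin edges can make new cycles good; balancedness then forces at most one of the two new cycles to be good. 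In all cases we get $\Delta c_g \le 1$.

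The main obstacle will be the weight/balance bookkeeping in the split case: unlike the purely combinatorial bound $\Delta c \le 1$, here I must track the intergenic weights carefully to rule out both pieces of a split being simultaneously balanced \emph{and} clean. The cleanest route is probably to observe that $\Delta c_g \le \Delta c$ always holds for reversals — because any newly good cycle is in particular a newly counted cycle on the ``vertex/cycle'' ledger, and the two quantities share the same $|\pi^S|+1$ normalization — so the statement follows immediately from Lemma~\ref{lemma:lb_r_c}. If that inequality is not quite free, I would fall back on the case analysis above, with the split case handled by the conservation-of-weight-and-labels argument.
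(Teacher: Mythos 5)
The paper does not prove this lemma at all: it is imported verbatim as Lemma~2 of \cite{2021b-alexandrino-etal}, so there is no in-paper argument to compare against. Your fallback case analysis (classify by whether the two affected origin edges lie in one cycle or two, and in the split case combine conservation of origin-edge weight with persistence of labels) is the right route and, with the fixes below, does yield a complete proof. Since only the two origin edges cut by the reversal are replaced, and goodness depends only on weights and labels (not on orientation), every cycle avoiding both of those edges keeps its status; so the whole question reduces to the at most two affected cycles, exactly as you set it up.

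There are, however, two concrete problems. First, your proposed shortcut ``$\Delta c_g \le \Delta c$ always holds for reversals'' is false: a reversal that merges two bad cycles into a single good cycle has $\Delta c = -1$ but $\Delta c_g = +1$, so the lemma cannot be deduced from Lemma~\ref{lemma:lb_r_c} this way and you must rely on the case analysis. Second, your stated key step --- ``at most one of $C',C''$ can be good'' --- is not true and is also not what you need. A good cycle with two origin and two target edges can perfectly well be split into two trivial good cycles (choose the cut so that each piece receives matching origin and target weight); your claim that ``the two halves cannot both be balanced'' in this situation is wrong. The correct accounting is: if $C$ was good it already contributed $1$ to $c_g$, so even two good pieces give a net gain of at most $1$; the claim ``at most one piece is good'' is only needed when $C$ was bad, and there it splits into two sub-cases you should keep separate --- if $C$ is unbalanced, weight conservation under the partition prevents both pieces from being balanced, while if $C$ is balanced but labeled, it is the survival of some label in at least one piece (not ``balancedness,'' as your text says) that prevents both pieces from being clean. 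With those corrections the argument closes.
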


\vspace{8pt}
\begin{lemma}\label{lemma:lb_d_c}
For any indel $\delta$, we have $\Delta c(\mathcal{G}_1,\mathcal{G}_2,\delta) \le 0$.
\end{lemma}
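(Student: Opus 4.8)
The plan is to first restate the inequality as a bound on how much an indel can increase the number of cycles of the labeled intergenic breakpoint graph. Since every vertex of $G(\mathcal{G}_1,\mathcal{G}_2)$ is incident to exactly one origin edge and one target edge, the number of origin edges equals $|\pi^S|+1$, so $|\pi^S|+1-c(\mathcal{G}_1,\mathcal{G}_2)$ is exactly (number of origin edges) $-$ (number of cycles); $\Delta c$ is the decrease of this quantity caused by the operation, and it is additive along a sequence of operations (the intermediate terms telescope, since $\mathcal{G}_2$ stays fixed). If $k\ge 0$ denotes the number of genes that the indel $\delta$ adds to $\pi^S$ (so $k=0$ for a deletion), then applying $\delta$ raises the origin-edge count by exactly $k$, hence $\Delta c(\mathcal{G}_1,\mathcal{G}_2,\delta)\le 0$ is equivalent to the claim that $\delta$ increases the number of cycles by at most $k$. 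I would prove this by treating deletions and insertions separately.

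For a deletion, or for an insertion affecting only intergenic regions, the argument is immediate: such an operation changes only the weights and the labels of origin edges, and leaves $\pi^S$, the vertex set, the origin edges as pairs of vertices, and all target edges unchanged (the last three depend only on $\pi^S$ and $\iota$, and $\alpha$ characters do not occur in $\pi^S$). Therefore $G(\mathcal{G}_1\cdot\delta,\mathcal{G}_2)$ has the same cycle decomposition as $G(\mathcal{G}_1,\mathcal{G}_2)$, so $k=0$ and $\Delta c=0$.

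For an insertion I would first reduce to inserting a single gene. An insertion that places $k\ge 1$ genes into one intergenic region has the same effect on $\mathcal{G}_1$ as a sequence of $k$ single-gene insertions (each new gene placed next to those already inserted) followed by one intergenic-only indel adjusting the leftover nucleotides; by additivity of $\Delta c$ and by the previous paragraph (the final step contributes $0$), it suffices to bound $\Delta c$ for a single-gene insertion of a gene $g$. For this I would pinpoint the two edges of $G(\mathcal{G}_1,\mathcal{G}_2)$ that the insertion disturbs: the origin edge $o_p=(+\pi^S_{p-1},-\pi^S_p)$ whose intergenic region receives $g$, and the unique target edge bearing the label $g$, which must be $t_{g-1}=(+(g-1),-(g+1))$ — here I use that genes exclusive to $\mathcal{G}_2$ are pairwise non-consecutive in $\iota$, so both $g-1$ and $g+1$ lie in $\Sigma_{\pi^S}$ and $next(g-1,\Sigma_{\pi^S}\cap\Sigma_{\iota})=g+1$. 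After the insertion, two new vertices $-g,+g$ appear, $o_p$ is replaced by two origin edges joining $+\pi^S_{p-1}$ and $-\pi^S_p$ to the new vertices, $t_{g-1}$ loses its label and becomes $(+(g-1),-g)$, and a new unlabeled target edge $(+g,-(g+1))$ appears; inspecting the incidences shows that this recombines the endpoint pairs $\{+\pi^S_{p-1},-\pi^S_p\}$ and $\{+(g-1),-(g+1)\}$ through the two new vertices exactly as a $2$-break on $\{o_p,t_{g-1}\}$ would (the orientation assigned to $g$ only exchanges the roles of $-g$ and $+g$).

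The only genuine obstacle is this last step: showing that the $2$-break-type operation raises the number of cycles by at most one. I would argue by the usual dichotomy — if $o_p$ and $t_{g-1}$ lie in different cycles the operation merges them and $c$ drops by one, while if they lie in the same cycle then $c$ is either unchanged or increases by exactly one, the increase happening precisely when the recombination cuts that cycle in two. The potentially delicate sub-case is when $o_p$ and $t_{g-1}$ share one or both endpoints (for instance when $g$ is inserted next to $g-1$ or $g+1$); I would check it directly, noting in particular that when $o_p$ and $t_{g-1}$ already form a length-two cycle, that cycle is split into two length-two cycles, which is exactly the case in which the insertion makes progress toward the identity. In every case the number of cycles grows by at most $1=k$, so $\Delta c\le 0$, which together with the deletion case proves the lemma.
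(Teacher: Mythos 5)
Your proof is correct and follows the same strategy as the paper: deletions (and intergenic-only insertions) leave the cycle structure untouched, while each inserted gene can account for at most one new cycle, so the increase in $c$ is offset by the increase in $|\pi^S|$. The paper merely asserts the latter claim in one line, whereas you substantiate it by reducing to single-gene insertions and checking that each acts as a $2$-break on the affected origin and target edges; this is a welcome elaboration rather than a different route.
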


\begin{proof}
We can only delete characters $\alpha$, so deletions cannot have $\Delta c(\mathcal{G}_1,\mathcal{G}_2,\delta) \neq 0$. Besides, for each cycle inserted in $G(\mathcal{G}_1,\mathcal{G}_2)$ we must also insert at least one gene. Consequently, any change in $c(\mathcal{G}_1,\mathcal{G}_2))$ will be offset by a change in $|\pi^S|$, so insertions cannot have $\Delta c(\mathcal{G}_1,\mathcal{G}_2,\delta) > 0$.
\end{proof}

\vspace{8pt}
\begin{lemma}\cite[Lemma 3]{2021b-alexandrino-etal}\label{lemma:lb_d_cg}
For any indel $\delta$, we have $\Delta c_g(\mathcal{G}_1,\mathcal{G}_2,\delta) \le 1$.
\end{lemma}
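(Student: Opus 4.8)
The plan is to treat the two kinds of indel separately, using throughout the identity $\Delta c_g(\mathcal{G}_1,\mathcal{G}_2,\delta) = \bigl(|\pi^S| - c_g(\mathcal{G}_1,\mathcal{G}_2)\bigr) - \bigl(|\pi^{S\cdot\delta}| - c_g(\mathcal{G}_1\cdot\delta,\mathcal{G}_2)\bigr)$, where the ``$+1$'' terms cancel. If $\delta$ is a deletion, recall (as in the proof of Lemma~\ref{lemma:lb_d_c}) that only $\alpha$-characters may be removed, and by the combining convention no two $\alpha$'s are consecutive; hence $\delta$ erases at most one $\alpha$ together with nucleotides of the two surrounding intergenic regions, all of which lie inside the single maximal gap of $\pi^S$ represented by one origin edge $o_k$. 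Thus $\delta$ leaves the vertices, all target edges, and all origin edges other than $o_k$ untouched, changing only $w(o_k)$ and possibly $\ell(o_k)$ from $\alpha$ to empty. The cycle decomposition is therefore unchanged as a family of edge sets, so only the cycle containing $o_k$ can change status and $c_g$ increases by at most $1$; since $\pi^S$ is unaffected, $\Delta c_g \le 1$.

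If $\delta$ is an insertion $\phi$ of a gene sequence $A$ with $\ell = |A|$ (every character of $A$ being exclusive to $\mathcal{G}_2$), I would first pin down how the graph changes: $|\pi^{S\cdot\phi}| = |\pi^S| + \ell$; the origin edge $o$ of the gap receiving $A$ is replaced by a path of $\ell+1$ new origin edges passing through the $2\ell$ new vertices $\{+g,-g : g \in A\}$; and, for each $g \in A$, the target edge previously carrying the label $g$ is split into two new target edges, so at most $\ell$ old target edges disappear. The key structural facts are that every new vertex is incident only to new edges, and that every new target edge is incident to a new vertex and hence shares that vertex with one of the $\ell+1$ new origin edges; consequently every cycle of $G(\mathcal{G}_1\cdot\phi,\mathcal{G}_2)$ that uses any new edge uses at least one new origin edge. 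Now split the good cycles of $G(\mathcal{G}_1\cdot\phi,\mathcal{G}_2)$ into those formed entirely from surviving, unmodified edges and those using a new origin edge. Cycles of the first kind are (edge sets of) cycles that already occurred in $G(\mathcal{G}_1,\mathcal{G}_2)$ with the same weights and labels, hence were already good, yielding an injection into the at-most-$c_g(\mathcal{G}_1,\mathcal{G}_2)$ good cycles before; cycles of the second kind number at most $\ell+1$, since there are only $\ell+1$ new origin edges, each in a single cycle. Hence $c_g(\mathcal{G}_1\cdot\phi,\mathcal{G}_2) \le c_g(\mathcal{G}_1,\mathcal{G}_2) + \ell + 1$, and $\Delta c_g = -\ell + \bigl(c_g(\mathcal{G}_1\cdot\phi,\mathcal{G}_2) - c_g(\mathcal{G}_1,\mathcal{G}_2)\bigr) \le -\ell + (\ell+1) = 1$.

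The hard part will be nailing down the structural description in the insertion case directly from the definition of $G$: that inserting $A$ really does create exactly $\ell+1$ origin edges and no more than $\ell$ new target edges; that all $2\ell$ new vertices have only new incident edges, so that no extra new good cycle can slip past the count; and that boundary situations do not break the correspondence between first-kind good cycles and pre-insertion good cycles — for instance when the insertion gap abuts a gene that is adjacent in $\iota$ to some $g \in A$, when $0$ or $m+1$ is such a neighbour, or when the genes of $A$ are not consecutive in $\iota$ so that several distinct target edges are split. Once this description is established, the partition into the two kinds of good cycles and the injection argument are routine.
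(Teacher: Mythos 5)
The paper does not actually prove this lemma: it is imported verbatim as a citation to Lemma~3 of the earlier work \cite{2021b-alexandrino-etal}, so there is no in-paper argument to compare yours against. That said, your self-contained proof is sound. The deletion case is handled correctly: validity of a deletion forces every removed gene to be an $\alpha$, the no-consecutive-$\alpha$ normalization caps this at one $\alpha$, and both flanking intergenic regions feed the same origin edge $o_k$, so only that one edge's weight and label change, only its cycle can change status, and $|\pi^S|$ is fixed. The insertion case rests on the right two observations: every new vertex carries exactly one (new) origin edge and one (new) target edge, so any cycle touching a new edge must traverse one of the $\ell+1$ new origin edges, bounding the new-or-modified good cycles by $\ell+1$; and cycles built entirely from surviving edges inject into the old decomposition with unchanged weights and labels, so they contribute at most $c_g(\mathcal{G}_1,\mathcal{G}_2)$. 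Combined with $|\pi^{S\cdot\phi}| = |\pi^S| + \ell$ this gives $\Delta c_g \le 1$, and the $\ell = 0$ (intergenic-only) insertion is the degenerate case of the same count. Two small points to tighten when writing it up: the phrase ``split into two new target edges'' should read ``split into $k+1$ pieces'' when $k$ inserted genes land in the same interval $(x,\mathrm{next}(x))$ --- your subsequent counting already accommodates this, since each new target edge still meets a new vertex; and the injection argument should explicitly invoke uniqueness of the alternating-cycle decomposition (each vertex has exactly one incident edge of each type) to conclude that a surviving-edge cycle \emph{is} an old cycle rather than merely resembling one. Neither is a gap in the idea, only in the exposition you already flagged as remaining work.
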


\vspace{8pt}
\begin{lemma}\cite[Lemma 2.1]{1998-bafna-pevzner}\label{lemma:lb_t_c}
For any transposition $\tau$, we have $\Delta c(\mathcal{G}_1,\mathcal{G}_2,\tau) \le 2$.
\end{lemma}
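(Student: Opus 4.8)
The plan is to analyze how a single transposition can affect the cycle structure of the Labeled Intergenic Breakpoint Graph, reducing the claim to the classical combinatorial fact about transpositions and cycle counts. A transposition $\tau^{(i,j,k)}_{(x,y,z)}$ acts on $G(\mathcal{G}_1,\mathcal{G}_2)$ by reordering vertices and replacing exactly three origin edges — the ones incident to the two cut points in $\pi^S$ and the one at the junction — by three new origin edges, whose weights and labels are inherited (and possibly split) from the removed ones. The target edges are untouched. So the essential point is: removing three origin edges and reinserting three origin edges on the same vertex set can change the number of alternating cycles by at most $\pm 2$ in the direction that matters for $\Delta c$, namely the cycle count can increase by at most $2$.

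First I would recall the definition $\Delta c(\mathcal{G}_1,\mathcal{G}_2,\tau) = \bigl(|\pi^S| + 1 - c(\mathcal{G}_1,\mathcal{G}_2)\bigr) - \bigl(|\pi^S\cdot\tau| + 1 - c(\mathcal{G}_1\cdot\tau,\mathcal{G}_2)\bigr)$. Since a transposition does not insert or delete genes, $|\pi^S| = |\pi^S\cdot\tau|$, so $\Delta c(\mathcal{G}_1,\mathcal{G}_2,\tau) = c(\mathcal{G}_1\cdot\tau,\mathcal{G}_2) - c(\mathcal{G}_1,\mathcal{G}_2)$; the statement $\Delta c \le 2$ is therefore exactly the assertion that one transposition increases the number of cycles by at most $2$. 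Next I would observe that the three origin edges that $\tau$ destroys lie in at most three cycles of the decomposition; after the operation, those (up to three) cycles are ``cut'' at three places and rewired into some new set of cycles using three fresh origin edges. Contracting each of the three new origin edges to a point — legitimate since target edges are unchanged — reduces the local picture to a union of paths whose endpoints are reconnected; a standard parity/counting argument (this is exactly the argument of Bafna and Pevzner for the permutation case, cited as Lemma~\ref{lemma:lb_t_c}) shows that splicing at three points can create at most two additional cycles. All cycles not meeting any of the three affected origin edges are preserved verbatim, so they contribute nothing to the change. Combining these observations gives $c(\mathcal{G}_1\cdot\tau,\mathcal{G}_2) - c(\mathcal{G}_1,\mathcal{G}_2) \le 2$, hence $\Delta c(\mathcal{G}_1,\mathcal{G}_2,\tau) \le 2$.

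The main obstacle is making the reduction to the classical permutation result fully rigorous in this labeled, weighted, intergenic setting: one must check that the presence of weights, labels, and $\alpha$-vertices genuinely does not interfere — i.e. that the bound on the change in the number of cycles depends only on the underlying graph-theoretic operation of replacing three origin edges and reordering vertices, and is insensitive to how weights and labels get redistributed among the new edges. Since the cited Lemma~\ref{lemma:lb_t_c} of Bafna and Pevzner already establishes the combinatorial bound for the breakpoint graph of permutations, and the Labeled Intergenic Breakpoint Graph coincides with that graph once weights and labels are forgotten, the cleanest route is simply to note that forgetting the weights and labels turns $G(\mathcal{G}_1,\mathcal{G}_2)$ into an ordinary breakpoint graph, that the transposition induces on this forgotten graph precisely an ordinary transposition (or, when $\alpha$-characters are present, a transposition possibly acting on a slightly larger vertex set, which only helps), and then invoke Lemma~\ref{lemma:lb_t_c} directly. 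I would therefore keep the proof short: reduce $\Delta c$ to the cycle-count difference via $|\pi^S| = |\pi^S \cdot \tau|$, and then cite Lemma~\ref{lemma:lb_t_c} applied to the unlabeled, unweighted graph.
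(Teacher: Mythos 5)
Your proposal is correct and matches the paper's approach: the paper gives no proof of this lemma at all, simply citing Bafna and Pevzner's Lemma~2.1, and your argument is exactly the standard reduction to that cited result (noting $|\pi^S| = |\pi^S\cdot\tau|$ so that $\Delta c$ is just the change in cycle count, and that weights and labels are irrelevant to the combinatorial splicing bound). The only cosmetic slip is that you refer to the external Bafna--Pevzner lemma by the paper's own label for the statement being proved, which reads as circular; cite the external reference directly instead.
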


\vspace{8pt}
\begin{lemma}\cite[Lemma 2]{2023b-alexandrino-etal}\label{lemma:lb_t_cg}
For any transposition $\tau$, we have $\Delta c_g(\mathcal{G}_1,\mathcal{G}_2,\tau) \le 2$.
\end{lemma}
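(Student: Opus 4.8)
The plan is to track how a single transposition $\tau$ modifies $G(\mathcal{G}_1,\mathcal{G}_2)$ locally, and then overlay ``goodness'' on top of the classical cycle‑count bookkeeping behind Lemma~\ref{lemma:lb_t_c}. First I would observe that $\tau$ does not change the gene content, so $|\pi^S\cdot\tau|=|\pi^S|$ and hence $\Delta c_g(\mathcal{G}_1,\mathcal{G}_2,\tau)=c_g(\mathcal{G}_1\cdot\tau,\mathcal{G}_2)-c_g(\mathcal{G}_1,\mathcal{G}_2)$; it therefore suffices to bound the gain in good cycles by $2$. A transposition deletes three origin edges $e_1,e_2,e_3$ and inserts three origin edges $e_1',e_2',e_3'$, leaving every other edge (in particular every target edge) untouched, so only cycle membership can change. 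Let $A$ (resp.\ $A'$) be the set of cycles of $G(\mathcal{G}_1,\mathcal{G}_2)$ (resp.\ of $G(\mathcal{G}_1\cdot\tau,\mathcal{G}_2)$) containing some $e_i$ (resp.\ some $e_i'$). Every cycle outside $A$ reappears verbatim outside $A'$ (same edges, weights, labels), so $\Delta c_g(\mathcal{G}_1,\mathcal{G}_2,\tau)$ equals the number of good cycles in $A'$ minus the number of good cycles in $A$.

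Since $|A'|\le 3$, the only dangerous situation is $|A'|=3$, i.e.\ the three new origin edges lie in three distinct cycles $D_1,D_2,D_3$. The key structural step is to show that then $|A|=1$. Inspecting the adjacencies broken and created, $\tau$ replaces the origin edges $(u_1,v_1),(u_2,v_2),(u_3,v_3)$ (with $u_i$ the positive and $v_i$ the negative endpoint) by $(u_1,v_2),(u_2,v_3),(u_3,v_1)$; that is, it cyclically shifts which negative endpoint each positive endpoint is paired with. Now apply the inverse transposition — which is again a transposition, swapping the two blocks back — to $G(\mathcal{G}_1\cdot\tau,\mathcal{G}_2)$: it deletes $e_1',e_2',e_3'$, which sit in the three distinct cycles $D_1,D_2,D_3$, turning them into three vertex‑disjoint paths, and re‑adds $e_1,e_2,e_3$; chasing the endpoints shows that a single closed walk traverses all three paths and all three re‑added edges, so they form one cycle. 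Hence $e_1,e_2,e_3$ all lie in a single cycle $C$ of $G(\mathcal{G}_1,\mathcal{G}_2)$, so $|A|=1$.

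It remains to show that when $|A|=1$, $|A'|=3$, and $D_1,D_2,D_3$ are all good, the cycle $C$ is itself good, which gives $\Delta c_g=3-1=2$ (if fewer than three of the $D_i$ are good, the bound $\le 2$ is immediate). For cleanness: the target edges of $C$ are exactly those of $D_1\cup D_2\cup D_3$, only redistributed; the origin edges of $C$ other than $e_1,e_2,e_3$ are unchanged and still land in the $D_i$; and if one of $e_1,e_2,e_3$ carried the label $\alpha$, that $\alpha$‑character is still present between two genes of the new genome and hence labels one of $e_1',e_2',e_3'$. So any label on an edge of $C$ forces a label on an edge of some $D_i$, contradicting goodness; thus $C$ is clean. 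For balance: $w(e_1')+w(e_2')+w(e_3')=w(e_1)+w(e_2)+w(e_3)$ because a transposition only redistributes the nucleotides of the three cut intergenic regions, and the multiset of target edges (with weights) in $D_1\cup D_2\cup D_3$ equals that of $C$; hence the total origin‑minus‑target weight imbalance over $D_1\cup D_2\cup D_3$ equals the imbalance of $C$, so if all $D_i$ are balanced then $C$ is balanced. Therefore $C$ is good and $\Delta c_g\le 2$ in every case.

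The step I expect to be the main obstacle is making the structural claim rigorous — that three distinct affected cycles in $G(\mathcal{G}_1\cdot\tau,\mathcal{G}_2)$ must come from one cycle in $G(\mathcal{G}_1,\mathcal{G}_2)$ — since one has to be careful about the direction in which each path is traversed when chaining, and about degenerate transpositions whose swapped blocks are a single gene or a lone $\alpha$ (so that a cut ``touches'' an $\alpha$ rather than a graph vertex), checking that these are subsumed by the generic case. The accompanying label‑migration fact — that an $\alpha$ on a deleted origin edge necessarily resurfaces on one of the inserted origin edges — is the other delicate point, as it is what couples the gene‑content information recorded in $\pi^S$ with the labels of $G(\mathcal{G}_1,\mathcal{G}_2)$.
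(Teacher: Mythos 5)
The paper does not actually prove this lemma: it is imported verbatim as Lemma~2 of \cite{2023b-alexandrino-etal}, so there is no in-paper argument to compare yours against, and a self-contained proof is genuinely useful content. Judged on its own, your reconstruction is essentially sound. The reduction to counting good cycles among the affected cycles is right, since a transposition fixes $|\pi^S|$ and every target edge and only re-pairs three origin edges, so all other cycles persist unchanged. Your key structural claim --- that three distinct new cycles force the three old origin edges into a single old cycle --- is correct; the endpoint-chasing via the inverse transposition works, and it can also be phrased as a parity statement: the old pairing, the new pairing, and the pairing induced by the three residual paths are all bijections between the three $+$ endpoints and the three $-$ endpoints (the graph is bipartite between $+$ and $-$ vertices), the old and new pairings differ by a $3$-cycle, hence the number of affected cycles before and after the operation has the same parity, which rules out every case except $|A|=1$ when $|A'|=3$. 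The label-migration and weight-conservation steps are also correct: every $\alpha$ sitting in one of the three cut ``super-regions'' reappears in one of the three new ones, and the total origin weight over the affected edges is conserved, so goodness of all three $D_i$ forces goodness of $C$ and $\Delta c_g = 3-1 = 2$. Two points should be tightened rather than waved at: first, the degenerate transpositions you flag (a swapped block consisting only of $\alpha$'s, so that fewer than three distinct origin edges are actually cut) are not ``subsumed by the generic case'' --- they need their own one-line treatment, which is easy because then at most two cycles of the new graph are affected and $\Delta c_g \le 2$ is immediate; second, the assertion that $C$ and $D_1\cup D_2\cup D_3$ carry the same target edges should be justified by noting that they span the same vertex set and target edges are untouched. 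Neither is a real gap.
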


In the next lemmas, we describe operations that can produce a positive value for $\Delta c$ or $\Delta c_g$ depending on the characteristics of a given cycle $C$ of the graph $G(\mathcal{G}_1,\mathcal{G}_2)$.

\vspace{8pt}
\begin{lemma}\cite[Lemma 5]{2021b-alexandrino-etal}\label{lemma:triv_1d}
If there is in $G(\mathcal{G}_1,\mathcal{G}_2)$ a trivial cycle $C$ with an empty origin edge, such that $C$ is unbalanced and the desire edge is empty or $C$ is non-negative and the target edge has a non-empty label, then there exists an indel $\delta$, such that $\Delta c(\mathcal{G}_1,\mathcal{G}_2,(\delta)) = 0$ and $\Delta c_g(\mathcal{G}_1,\mathcal{G}_2,(\delta)) = 1$.
\end{lemma}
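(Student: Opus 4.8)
\textit{Proof proposal.}

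The plan is to treat the two alternatives of the hypothesis separately and, in each case, exhibit a single indel $\delta$ and read off its effect on $c(\cdot)$, $c_g(\cdot)$, and $|\pi^S|$ through the formulas defining $\Delta c$ and $\Delta c_g$. First note the common setup: since $C$ is trivial, its origin edge $o$ and its target edge $t$ join the same pair of vertices $\{+x,-y\}$ with $y = next(x,\Sigma_{\pi^S}\cap\Sigma_\iota)$; because $o$ is empty (its label is empty), gene $x$ is immediately followed by gene $y$ in $S$, so $o$ corresponds to a single intergenic region $\breve{S}_j$ of $\mathcal{G}_1$ with $w(o)=\breve{S}_j$, while $w(t)$ is the total length in $\breve\iota$ of the intergenic regions strictly between $x$ and $y$. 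In both cases $\delta$ will affect only this region (plus, in the second case, insert one gene inside it), so no cycle other than $C$ changes status.

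For the first alternative ($C$ unbalanced and $t$ empty), the cycle $C$ is clean, hence bad only because it is unbalanced. I would take $\delta$ to be the intergenic-only indel that resets $\breve{S}_j$ to the value $w(t)$: an insertion of $w(t)-w(o)$ nucleotides if $C$ is positive, or a deletion of $w(o)-w(t)$ nucleotides if $C$ is negative, which is valid since then $0 \le w(o)-w(t) \le w(o)$. This leaves $\pi^S$ and the cycle decomposition unchanged except that $C$ becomes balanced and clean, i.e.\ good. Plugging $|\pi^S\cdot(\delta)| = |\pi^S|$, $c(\mathcal{G}_1\cdot(\delta),\mathcal{G}_2) = c(\mathcal{G}_1,\mathcal{G}_2)$, and $c_g(\mathcal{G}_1\cdot(\delta),\mathcal{G}_2) = c_g(\mathcal{G}_1,\mathcal{G}_2)+1$ into the definitions gives $\Delta c(\mathcal{G}_1,\mathcal{G}_2,(\delta))=0$ and $\Delta c_g(\mathcal{G}_1,\mathcal{G}_2,(\delta))=1$.

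For the second alternative ($C$ non-negative and $t$ carrying a non-empty label), the key simplification is that, by the convention forbidding two consecutive characters of $\iota$ that are exclusive to $\mathcal{G}_2$, a non-empty label on $t$ must be the single character $x+1$ and $y=x+2$; so exactly one gene is missing and $w(t)=\breve\iota_{x+1}+\breve\iota_{x+2}$. I would let $\delta$ be the insertion that places the character $x+1$ (positive, admissible because $x+1\notin\Sigma_{\pi^S}$) inside $\breve{S}_j$ after position $x':=\min(w(o),\breve\iota_{x+1})$, flanked by intergenic regions $\breve{A}_1:=\breve\iota_{x+1}-x'$ and $\breve{A}_2:=\breve\iota_{x+2}-(w(o)-x')$; a two-line case check, using the non-negativity of $C$ precisely when $w(o)>\breve\iota_{x+1}$, shows $\breve{A}_1,\breve{A}_2\ge 0$. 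After $\delta$, the bad cycle $C$ is replaced by the two trivial cycles on $\{+x,-(x+1)\}$ and $\{+(x+1),-(x+2)\}$, both clean (the inserted gene is in increasing order with positive sign) and balanced by the choice of $x'$, $\breve{A}_1$, $\breve{A}_2$, while every other cycle is untouched. Hence $c$ grows by one, $c_g$ by two, and $|\pi^S|$ by one, which yields $\Delta c(\mathcal{G}_1,\mathcal{G}_2,(\delta))=0$ and $\Delta c_g(\mathcal{G}_1,\mathcal{G}_2,(\delta))=1$.

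I expect the second alternative to be the main obstacle. One must see that a single insertion can simultaneously kill the label of $t$ and absorb the whole weight $w(o)$ of the old origin edge into the two new origin edges while matching each against its target edge, and this works only because the label forces a one-gene block and because the non-negativity hypothesis is exactly the inequality $w(o)\le\breve\iota_{x+1}+\breve\iota_{x+2}$ needed to split $w(o)$ feasibly. Everything else is bookkeeping: confirming that no cycle besides $C$ is affected, and that $\Delta|\pi^S|$ is $0$ in the first case and $1$ in the second when substituted into the definitions of $\Delta c$ and $\Delta c_g$.
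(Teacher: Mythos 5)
The paper does not prove this lemma; it imports it verbatim from the cited prior work, so there is no in-paper proof to compare against. Your reconstruction is correct and is the natural argument one would expect that citation to contain: the unbalanced-clean case is a pure intergenic insertion/deletion on the origin edge, and the labeled non-negative case is a single gene insertion of $+(x+1)$ whose feasibility ($\breve{A}_1,\breve{A}_2\ge 0$) is exactly the non-negativity hypothesis $w(o)\le \breve{\iota}_{x+1}+\breve{\iota}_{x+2}$; your accounting of $|\pi^S|$, $c$, and $c_g$ in both cases checks out against the definitions of $\Delta c$ and $\Delta c_g$. The only blemish is cosmetic: you reuse $x'$ for the insertion position, whereas the paper's operation definitions reserve $x'$ for the remainder $\breve{S}_i-x$, so a different symbol would avoid confusion.
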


\vspace{8pt}
\begin{lemma}\cite[Lemma 6]{2021b-alexandrino-etal}\label{lemma:triv_2d}
If there is in $G(\mathcal{G}_1,\mathcal{G}_2)$ a trivial bad cycle $C$, then there exists a sequence $R$ with up to $2$ indels, such that $\Delta c(\mathcal{G}_1,\mathcal{G}_2,R) = 0$ and $\Delta c_g(\mathcal{G}_1,\mathcal{G}_2,R) = 1$.
\end{lemma}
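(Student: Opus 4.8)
\section*{Proof plan}

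The plan is a case analysis on the trivial bad cycle $C$, which consists of a single origin edge $o$ and a single target edge $t$; the cases are driven by the label of $o$ (either $\alpha$ or empty), the label of $t$ (empty or not), and the sign of $w(o) - w(t)$. I would isolate two workhorse deletions. (i) If $\ell(o) = \alpha$, a single deletion can remove the gene(s) hidden in that $\alpha$ and, by spanning from one of the two intergenic pieces flanking the $\alpha$ into the other, simultaneously bring $w(o)$ down to any chosen value in $[0, w(o)]$. (ii) If $\ell(o)$ is empty, a single deletion inside the one intergenic region represented by $o$ reduces $w(o)$ to any smaller value. Both of these leave $|\pi^S|$ and the cycle decomposition structurally unchanged. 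Everything beyond them is then handled by Lemma~\ref{lemma:triv_1d}: once the origin edge of a trivial cycle is clean and the cycle is either unbalanced with a clean target or non-negative with a labeled target, one more indel makes it good.

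Concretely: if $\ell(o) = \alpha$, first apply move (i) bringing $w(o)$ down to $\min(w(o), w(t))$; the resulting cycle then has a clean origin and is non-negative, so Lemma~\ref{lemma:triv_1d} finishes it with a single further indel (and when $\ell(t)$ is empty with $w(t) \le w(o)$, move (i) can already reach $w(o) = w(t)$, so one indel suffices; when $\ell(t)$ is empty with $w(t) > w(o)$, we keep the weight and let Lemma~\ref{lemma:triv_1d} insert the missing nucleotides). If $\ell(o)$ is empty, then $C$ is bad because $w(o) \ne w(t)$ or $\ell(t)$ is non-empty; if $w(o) > w(t)$, use move (ii) to make $w(o) = w(t)$, after which $C$ is balanced with a clean origin and Lemma~\ref{lemma:triv_1d} closes it with at most one more indel; if $w(o) \le w(t)$, Lemma~\ref{lemma:triv_1d} applies directly with one indel. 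Every branch uses at most two indels.

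For the accounting I would note that all operations act only on $o$ (a deletion of $\alpha$ or a trim) and, for the insertion supplied by Lemma~\ref{lemma:triv_1d}, on $o$ and $t$ (inserting the hidden gene splits $o$ and reveals the gene carried by $t$, hence splits $t$ as well); thus every other cycle, and in particular every other good cycle, is untouched. If no gene is inserted, $|\pi^S|$ and the cycle count are unchanged and $C$ itself becomes good, so $\Delta c = 0$ and $\Delta c_g = 1$. If a gene is inserted, $|\pi^S|$ increases by one, $C$ splits into two trivial cycles whose target weights are fixed by $\breve{\iota}$ and sum to $w(t)$, and the inserted nucleotides can be distributed so that both halves are balanced and clean; hence the cycle count rises by one and the good-cycle count by two, again giving $\Delta c = 0$ and $\Delta c_g = 1$ (consistent with Lemmas~\ref{lemma:lb_d_c} and~\ref{lemma:lb_d_cg}).

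The main obstacle is precisely this last balancing step together with keeping the count at two: the single-indel guarantee of Lemma~\ref{lemma:triv_1d} requires a clean, non-negative origin, so when $C$ is initially negative, or carries an $\alpha$ alongside a non-empty target label, the first deletion must be spent entirely on clean-up and weight trimming so that the subsequent insertion can be made to balance. Verifying that one preliminary deletion always achieves this (essentially the fact that trimming to $\min(w(o),w(t))$ leaves the origin weight at most $w(t)$) is the crux of the argument.
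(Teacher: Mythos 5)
The paper does not prove this lemma itself; it imports it by citation from prior work, so there is no in-paper proof to compare against. Your reconstruction is correct and is the natural argument: spend at most one deletion to clean the origin edge's $\alpha$ label and/or trim $w(o)$ down to $\min(w(o),w(t))$, which leaves a trivial cycle with a clean, non-negative origin, and then invoke Lemma~\ref{lemma:triv_1d} for the remaining indel; your case analysis over $\ell(o)$, $\ell(t)$, and the sign of $w(o)-w(t)$ is exhaustive, and the accounting ($\Delta c = 0$, $\Delta c_g = 1$, whether or not the second indel inserts a gene and splits $C$ into two balanced trivial cycles) is consistent with the definitions.
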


\vspace{8pt}
\begin{lemma}\label{lemma:unbal}
If there is in $G(\mathcal{G}_1,\mathcal{G}_2)$ an unbalanced clean cycle $C$, then there exists an indel $\delta$, such that $\Delta c(\mathcal{G}_1,\mathcal{G}_2,(\delta)) = 0$ and $\Delta c_g(\mathcal{G}_1,\mathcal{G}_2,(\delta)) = 1$.
\end{lemma}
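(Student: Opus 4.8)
The plan is to balance the cycle $C$ with a single intergenic indel, exploiting the fact that $C$ is clean so none of its edges carries a gene label. First I would record the basic structural consequences of cleanness: each origin edge $o_i$ of $C$ corresponds to exactly one intergenic region of $\mathcal{G}_1$ (there is no $\alpha$ between $\pi^S_{i-1}$ and $\pi^S_i$, since otherwise $\ell(o_i)=\alpha$), and each target edge of $C$ has an empty label. Let $d = \left|\sum_{o}w(o) - \sum_{t}w(t)\right| > 0$ be the imbalance of $C$, the sums ranging over the origin edges and the target edges of $C$ respectively. The argument then splits on whether $C$ is positive or negative.

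If $C$ is positive, i.e.\ the origin side is lighter, I would pick any origin edge $o$ of $C$ and let $\delta$ be the insertion of an empty gene sequence that adds exactly $d$ nucleotides to the single intergenic region associated with $o$. This raises $w(o)$ by $d$ and leaves every other edge weight untouched, so $C$ becomes balanced; it was clean and remains clean because no label is introduced. If $C$ is negative, the origin side is heavier; here the naive move on $\mathcal{G}_1$ (deleting intergenic nucleotides) need not be realizable by a single well-formed deletion, so instead I would use a virtual insertion adding $d$ nucleotides to some target edge of $C$. This is legitimate precisely because $C$ is clean, so that target edge has a clean label, which is exactly the precondition under which virtual insertions are allowed. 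Again $C$ becomes balanced and stays clean.

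It then remains to evaluate the two quantities. In either case $\delta$ inserts no gene and does not change the order or orientation of any gene, so $\pi^S$ and the entire cycle decomposition of $G(\mathcal{G}_1,\mathcal{G}_2)$ are unchanged apart from the single modified edge weight; hence $|\pi^S|$ and $c(\mathcal{G}_1,\mathcal{G}_2)$ are unchanged, giving $\Delta c(\mathcal{G}_1,\mathcal{G}_2,(\delta)) = 0$. For the good-cycle count, $C$ was bad (being unbalanced) and is now balanced and clean, hence good, while every other cycle is unaffected; therefore $c_g(\mathcal{G}_1,\mathcal{G}_2)$ increases by exactly one, i.e.\ $\Delta c_g(\mathcal{G}_1,\mathcal{G}_2,(\delta)) = 1$.

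The main obstacle, such as it is, lies in the negative case: one must resist fixing the imbalance purely by deleting nucleotides from $\mathcal{G}_1$, which can fail to be a single legal deletion when the excess weight is spread over several origin edges of $C$ (a deletion spanning more than one of them would also remove genes that must be kept, since $C$ is clean), and instead notice that the virtual-insertion operation — available exactly because $C$ is clean — absorbs the whole imbalance on one target edge in a single step. Everything else is routine bookkeeping about how an intergenic-only indel acts on the Labeled Intergenic Breakpoint Graph.
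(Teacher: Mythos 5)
Your proposal is correct and follows essentially the same route as the paper's own proof: a real insertion on an origin edge when $C$ is positive, and a virtual insertion on a target edge (permitted because cleanness of $C$ guarantees a clean label) when $C$ is negative. The extra bookkeeping you supply about why $\Delta c = 0$ and $\Delta c_g = 1$ is a welcome elaboration of details the paper leaves implicit, but it does not change the argument.
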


\begin{proof}
If $C$ is positive, we apply an insertion that increases the weight of an origin edge of $C$ to make $C$ balanced. If $C$ is negative, we apply a virtual insertion (that will be turned into a deletion in the end) that increases the weight of a target edge of $C$ to turn $C$ balanced.
\end{proof}

\vspace{8pt}
\begin{lemma}\cite[Lemma 4.6]{2021a-oliveira-etal}\label{lemma:good_ori_3t}
If there is in $G(\mathcal{G}_1,\mathcal{G}_2)$ an good oriented cycle $C$, then there exists a sequence $R$ with $3$ transpositions, such that $\Delta c(\mathcal{G}_1,\mathcal{G}_2,R) = 2$ and $\Delta c_g(\mathcal{G}_1,\mathcal{G}_2,R) = 2$.
\end{lemma}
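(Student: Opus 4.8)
The plan is to split the statement into two fairly independent concerns: the purely combinatorial effect of the transpositions on the cycle structure (which controls $\Delta c$ and the cleanliness of the cycles produced) and the intergenic weight bookkeeping (which controls whether the new cycles are balanced, hence good). First I would invoke the classical structure theory of oriented cycles for transpositions~\cite{1998-bafna-pevzner}: a non‑trivial oriented cycle $C$ contains a triple of origin edges on which a single transposition acts as a ``$2$-move'', i.e.\ replaces $C$ by three cycles. Since a transposition never introduces a label and $C$ is clean, all three resulting cycles are clean; and since no indel is used, $\pi^S$ is unchanged, so $\Delta c(\mathcal{G}_1,\mathcal{G}_2,R)$ and $\Delta c_g(\mathcal{G}_1,\mathcal{G}_2,R)$ are just the net gains in the number of cycles and of good cycles. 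Thus, if I can also guarantee that all three new cycles are balanced, then they are all good, $C$ (which was good) is gone, and I get $\Delta c = 3-1 = 2$ and $\Delta c_g = 3-1 = 2$.

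Next I would handle the weights. The splitting transposition replaces the three chosen origin edges, of total weight $\Omega$, by three ``junction'' origin edges whose weights $\mu_1,\mu_2,\mu_3$ sum to $\Omega$ and can each be tuned, via the cut‑point parameters $x,y,z$, within a prescribed interval determined by how the pieces recombine. The target edges are untouched, so the three new cycles have fixed target‑weight sums $\Theta_1,\Theta_2,\Theta_3$, and balance of the $p$‑th new cycle amounts to $\mu_p = \nu_p$, where $\nu_p$ is $\Theta_p$ minus the total weight of the unchanged origin edges inherited by that cycle. Because $C$ is balanced, $\nu_1+\nu_2+\nu_3 = \Omega$, so the required cut points exist precisely when each $\nu_p$ lies in its admissible interval; the only possible obstruction is that some $\nu_p$ is too large or negative.

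To remove this obstruction I would precede the splitting transposition by at most two auxiliary transpositions that are $0$-moves acting only on origin edges of $C$ — transpositions on non‑oriented triples of $C$'s origin edges, which keep $C$ a single cycle (still clean, still balanced, hence still contributing exactly as before to $c$ and $c_g$) while redistributing intergenic weight among its origin edges. Moving weight away from the origin edges that an over‑demanding new cycle would inherit, or onto the edges that will form its junction edge, simultaneously shrinks the corresponding $\nu_p$ and enlarges its interval; the claim to prove is that one or two such moves always suffice to make every $\nu_p$ admissible. Then the splitting transposition yields three good cycles, giving a sequence of at most three transpositions with $\Delta c = \Delta c_g = 2$; if strictly fewer than three are needed, I pad the sequence with a transposition and its inverse, which changes nothing.

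The main obstacle I expect is exactly that last step: showing the needed $0$-move transpositions genuinely exist (that the relevant triples of origin edges of $C$ are non‑oriented triples), that each moves intergenic weight in the required direction, and — most delicately — that performing them does not destroy the oriented configuration reserved for the final $2$-move. This forces a case analysis on the shape of an oriented cycle and on which $\nu_p$ is out of range, exploiting the freedom in the choice of which $2$-move of $C$ to carry out, and pinning the count of auxiliary moves at two rather than more is the quantitative heart of the argument.
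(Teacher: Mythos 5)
The paper does not actually prove this lemma; it imports it verbatim as Lemma~4.6 of~\cite{2021a-oliveira-etal}, so the only fair comparison is with that cited argument, which (as the paper's own proof of Lemma~\ref{lemma:good_ori_1t_2d} reveals) goes in the opposite order from yours: it first applies the splitting transposition so that one of the three resulting clean cycles is guaranteed balanced (their Lemma~4.3), leaving a positive and a negative cycle whose imbalances cancel, and then spends the remaining two transpositions rebalancing that pair. Your plan --- redistribute weight inside $C$ with up to two $0$-moves and only then perform the $2$-move --- is a legitimate alternative strategy, but as written it has a genuine gap, and in fact two.

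First, the mechanism you propose for the auxiliary $0$-moves does not always exist: you want transpositions acting on \emph{non-oriented} triples of origin edges of $C$, but an oriented cycle with exactly three origin edges has only one triple, and it is oriented, so any transposition on its origin edges is forced to be a $2$-move; there is nothing to redistribute with before splitting. (More generally, a triple of origin edges of $C$ that is convenient for moving weight between the two positions you care about need not be non-oriented.) Second, and more importantly, the step you yourself flag as ``the quantitative heart'' --- that at most two such moves always bring every $\nu_p$ into its admissible interval while preserving an oriented configuration for the final split --- is precisely the content of the lemma, and it is asserted rather than proved. Note also that a transposition confined to three origin edges can only shuffle the weight already sitting on those three edges (the new weights $x+y'$, $z+x'$, $y+z'$ sum to the old total), so if the deficit of one prospective cycle is spread across many origin edges of $C$, two moves touching six edges in total are not obviously enough; closing this requires the case analysis you defer, or the cited paper's split-first route in which the residual imbalance is concentrated in exactly two cycles whose origin edges can then be targeted directly. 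Until that analysis is carried out, the proposal is a plausible outline but not a proof.
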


\vspace{8pt}
\begin{lemma}\label{lemma:good_ori_1t_2d}
If there is in $G(\mathcal{G}_1,\mathcal{G}_2)$ an good oriented cycle $C$, then there exists a sequence $R$ with $1$ transposition and up to $2$ indels, such that $\Delta c(\mathcal{G}_1,\mathcal{G}_2,R) = 2$ and $\Delta c_g(\mathcal{G}_1,\mathcal{G}_2,R) = 2$.
\end{lemma}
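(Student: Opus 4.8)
The plan is to mimic the construction behind Lemma~\ref{lemma:good_ori_3t}, but to replace two of its three transpositions with indels, using the fact that indels are precisely the operations that repair intergenic imbalance (Lemma~\ref{lemma:unbal}). I would first apply to $C$ a transposition $\tau$ that is a \emph{$2$-move}, that is, one that splits $C$ into three cycles $C_1, C_2, C_3$, thereby contributing $\Delta c = 2$ (such a $\tau$ exists because $C$ is oriented, and by Lemma~\ref{lemma:lb_t_c} this is the largest value a single transposition can attain). Since $C$ is clean and a transposition creates no labels, $C_1, C_2, C_3$ are all clean. Moreover, no operation in the sequence $R$ I will build changes the gene content, so $|\pi^S \cdot R| = |\pi^S|$, and hence the values of $\Delta c$ and $\Delta c_g$ for $R$ coincide with the net increases of $c$ and $c_g$; I will also use that these quantities are additive along $R$.

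Next I would analyze the balance of $C_1, C_2, C_3$. The target edges of $C$ are partitioned among them, and the three origin edges of $C$ cut by $\tau$ are reassembled into three new origin edges, so the total origin weight and the total target weight of $C$ are each distributed among $C_1, C_2, C_3$. Because $C$ is balanced, this forces that whenever two of the three resulting cycles are balanced then so is the third. Hence, after $\tau$, either all three of $C_1, C_2, C_3$ are good, or exactly one is good and the other two are bad. In the first case take $R = (\tau)$, with $\Delta c = 2$ and $\Delta c_g = 2$. In the second case, each of the two bad cycles is an unbalanced clean cycle, so by Lemma~\ref{lemma:unbal} it can be turned into a good cycle by a single indel with $\Delta c = 0$ (also invoking Lemma~\ref{lemma:lb_d_c}) and $\Delta c_g = 1$; appending these indels gives $R = (\tau, \delta_1, \delta_2)$ with $\Delta c = 2 + 0 + 0 = 2$ and $\Delta c_g = 0 + 1 + 1 = 2$. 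In both cases $R$ has one transposition and at most two indels, as required.

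The step I expect to be the main obstacle is the claim that $\tau$ (which origin edges of $C$ to cut, and where inside their intergenic regions) can be chosen so that at most two of $C_1, C_2, C_3$ are bad, that is, so that at least one of them can be made balanced. For a fixed choice of the three cut edges, cycle $C_m$ is balanced exactly when its new origin edge receives a weight $\delta_m$ equal to the target-weight total of $C_m$ minus its uncut-origin-weight total; these $\delta_1, \delta_2, \delta_3$ sum to the total weight $w$ of the three cut edges, whereas the weight of the new origin edge of $C_m$ can be set anywhere in $[0, w' + w'']$, where $o'$ and $o''$ are the two cut edges from which it is formed, and the three upper bounds $w' + w''$ add up to $2w$. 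A pigeonhole argument on these numbers, together with the freedom to select among the $2$-moves of the oriented cycle $C$ one whose extracted trivial cycle carries a target edge that is no heavier than the sum of its two neighbouring origin edges in $C$ (such a target edge exists because the origin weights and the target weights of $C$ have equal sums), shows that $\tau$ can be chosen so that some $\delta_m$ lies in its achievable interval, and then $C_m$ can be balanced exactly. Reconciling this weight condition with the combinatorial conditions that decide which triples of origin edges of an oriented cycle admit a $2$-move is the delicate point; once it is settled, the remainder of the argument is the bookkeeping above.
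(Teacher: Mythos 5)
Your overall route is the same as the paper's: apply one transposition that splits the good oriented cycle $C$ into three clean cycles, at least one of which is balanced, and then repair the remaining two unbalanced clean cycles with one indel each via Lemma~\ref{lemma:unbal}. The bookkeeping at the end ($\Delta c = 2$, $\Delta c_g = 2$, additivity along $R$) is fine. However, there is a genuine gap exactly where you suspect one: the existence of a transposition acting on an oriented balanced clean cycle that produces \emph{at least one} balanced cycle among $C_1, C_2, C_3$ is the entire content of the step, and your argument for it does not close. Your intermediate claim that ``after $\tau$, either all three are good or exactly one is good'' is false for an arbitrary $2$-move --- it is perfectly possible that none of the three cycles is balanced --- so everything rests on the choice of $\tau$, which you leave unresolved. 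The pigeonhole you sketch only shows that not all three required weights $\delta_m$ can exceed their upper bounds $b_m$ (since $\sum \delta_m = w < 2w = \sum b_m$); it does not produce an index $m$ with $0 \le \delta_m \le b_m$ simultaneously, because some $\delta_m$ may be negative while the others overshoot their bounds. On top of that, the set of admissible cut-triples is constrained by the orientation structure of $C$ (not every triple of origin edges supports a cycle-splitting transposition), and you do not reconcile the weight condition with that combinatorial condition.

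The paper does not prove this existence claim either; it imports it wholesale as Lemma 4.3 of \cite{2021a-oliveira-etal}, which states precisely that a good oriented cycle admits a transposition splitting it into three cycles with one of them balanced. So your proposal is not wrong in spirit, but as written it replaces a citation with an incomplete sketch of a nontrivial argument, and the lemma is not established without either completing that argument or invoking the prior result.
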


\begin{proof}
By Lemma 4.3 from a previous work~\cite{2021a-oliveira-etal} we can apply one transposition that splits $C$ into three cycles such that one of them is balanced, and therefore good, because $C$ is clean. We can balance the other two cycles with one indel each, as stated in Lemma~\ref{lemma:unbal}.
\end{proof}

\vspace{8pt}
\begin{lemma}\label{lemma:bad_ori_1t_1d}
If there is in $G(\mathcal{G}_1,\mathcal{G}_2)$ a bad oriented cycle $C$, then there exists a sequence $R$ with uma transposition and up to um indel, such that $\Delta c(\mathcal{G}_1,\mathcal{G}_2,R) = 2$ and $\Delta c_g(\mathcal{G}_1,\mathcal{G}_2,R) = 1$.
\end{lemma}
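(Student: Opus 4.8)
The plan is to follow the template of the proof of Lemma~\ref{lemma:good_ori_1t_2d}, but budgeting only one indel. Since $C$ is oriented, Lemma~4.3 of~\cite{2021a-oliveira-etal} provides a transposition $\tau$ that splits $C$ into three cycles $C_1$, $C_2$, $C_3$; as a transposition leaves $|\pi^S|$ unchanged and replaces one cycle by three, $\Delta c(\mathcal{G}_1,\mathcal{G}_2,(\tau)) = 2$. The three new cycles inherit the target edges of $C$ (with their labels) intact, while the three origin edges cut by $\tau$ are replaced by three new origin edges whose weights are chosen through the cut positions and whose labels are chosen by deciding on which side of an $\alpha$-labeled origin edge to cut. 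So, within the freedom offered by Lemma~4.3, I would steer both the weights and the labels of $C_1$, $C_2$, $C_3$.

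I would then choose $\tau$ so that $C_1$ is clean --- pushing every labeled target edge of $C$ and every $\alpha$ of its labeled origin edges into $C_2$ or $C_3$ --- and so that $C_2$ and $C_3$ are both bad, which is possible because $C$ itself is bad: if $C$ has two or more labels put at least one into each of $C_2$, $C_3$; if $C$ has a single label put it into $C_2$ and, using the cut positions, leave $C_3$ unbalanced; if $C$ is clean (hence unbalanced) leave both $C_2$ and $C_3$ unbalanced, which is consistent with their imbalances summing to the non-zero imbalance of $C$. Now there are two sub-cases. If the cut positions can also be set so that $C_1$ is balanced, then $C_1$ is good while $C_2$, $C_3$ are not, so $\Delta c_g(\mathcal{G}_1,\mathcal{G}_2,(\tau)) = 1$ and I take $R = (\tau)$. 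Otherwise $C_1$ is clean and unbalanced; then all three cycles are bad, $\Delta c_g(\mathcal{G}_1,\mathcal{G}_2,(\tau)) = 0$, and applying to $C_1$ the indel $\delta$ furnished by Lemma~\ref{lemma:unbal} (or by Lemma~\ref{lemma:triv_1d} when $C_1$ happens to be trivial) gives $\Delta c(\mathcal{G}_1,\mathcal{G}_2,(\tau,\delta)) = 2$ and $\Delta c_g(\mathcal{G}_1,\mathcal{G}_2,(\tau,\delta)) = 1$; here $R = (\tau,\delta)$.

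The main obstacle will be justifying that the transposition guaranteed by Lemma~4.3 really leaves enough room to do all of this at once: keeping $C_1$ clean, keeping $C_2$ and $C_3$ bad, and fixing the weight of the new origin edge of $C_1$ cannot in general be arranged independently, because which edges of $C$ fall into which of the three arcs is constrained by the cyclic structure of an oriented cycle. I expect to need either a sharper statement of Lemma~4.3 (identifying a family of valid splitting transpositions, and choosing the one whose arcs place the labeled edges of $C$ favourably) or a separate argument treating the degenerate configurations --- a single labeled edge, a labeled edge forced to be one of the three cut edges, or $C_1$ forced to be a trivial cycle so that Lemma~\ref{lemma:triv_1d} rather than Lemma~\ref{lemma:unbal} must supply the indel. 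Verifying that in every such configuration a suitable $\tau$ (and, when needed, one indel) still exists is where the real work lies.
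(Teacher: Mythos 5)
Your plan diverges from the paper's at the crucial step, and the divergence is exactly where your argument has a gap that you yourself flag. You invoke Lemma~4.3 of~\cite{2021a-oliveira-etal} --- the same splitting lemma the paper uses for \emph{good} oriented cycles in Lemma~\ref{lemma:good_ori_1t_2d} --- and then ask it to do much more than it promises: to simultaneously route every label of $C$ away from $C_1$, keep both $C_2$ and $C_3$ bad, and tune the weight of $C_1$'s new origin edge. That lemma only guarantees the existence of a transposition making one of the three resulting cycles balanced; it gives you no control over which target edges (and hence which labels) land in which cycle, nor over the label of the new origin edges when an $\alpha$ sits inside a cut origin edge. Your closing paragraph concedes that establishing this freedom ``is where the real work lies,'' which means the proof is not actually carried out: the existence claim at the heart of the lemma is asserted, not proved. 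There are also concrete configurations your sketch does not dispose of, e.g.\ a cycle whose only badness is a single label sitting on one of the three origin edges that must be cut, where you cannot freely decide which side of the $\alpha$ each fragment receives while also satisfying the balance constraint on $C_1$.

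The paper sidesteps all of this by citing a result built for exactly this situation: Lemma~12 of~\cite{2023b-alexandrino-etal} guarantees a transposition splitting a bad oriented cycle into three cycles, one of which is a \emph{trivial non-negative cycle with a clean origin edge}. That is a strictly stronger and more specific conclusion than ``one of the three is balanced,'' and it hands the baton directly to Lemma~\ref{lemma:triv_1d}, which supplies the single indel (when needed) with $\Delta c = 0$ and $\Delta c_g = 1$. If you want to complete your argument without that citation, you would have to reprove something equivalent to it; as written, the proposal identifies the right target ($\Delta c = 2$ from the split, one good cycle created, at most one indel) but does not establish that a transposition with the required properties exists.
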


\begin{proof}
By Lemma 12 from a previous work~\cite{2023b-alexandrino-etal}, there is a transposition that turns $C$ into three cycles such that one of them is a trivial non-negative cycle with a clean source edge. If necessary, we can apply Lemma~\ref{lemma:triv_1d} to turn this cycle good with one indel.
\end{proof}

\vspace{8pt}
\begin{lemma}\cite[Lemma 7]{2021b-alexandrino-etal}\label{lemma:bad_div_1r_1d}
If there is in $G(\mathcal{G}_1,\mathcal{G}_2)$ a labeled divergent cycle $C$, then there exists a sequence $R$ with $1$ reversal and up to $1$ indel, such that $\Delta c(\mathcal{G}_1,\mathcal{G}_2,R) = 1$ and $\Delta c_g(\mathcal{G}_1,\mathcal{G}_2,R) = 1$.
\end{lemma}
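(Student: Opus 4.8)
The plan is to realize $R$ as a single reversal $\rho$ that breaks $C$ into two cycles $C_1$ and $C_2$, chosen so that $C_1$ is clean, followed — only if necessary — by one indel that balances $C_1$; this uses one reversal and at most one indel, as required.

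First I would exploit divergence. Since $C$ is divergent, in the standard drawing we traverse some origin edge of $C$ from left to right, so $C$ contains a pair of origin edges $o_a,o_b$ forming a \emph{divergent pair}: a reversal that cuts $C$ at $o_a$ and at $o_b$ reconnects the four endpoints so that the single cycle $C$ splits into two alternating cycles $C_1$ and $C_2$. This is exactly the mechanism by which a reversal attains $\Delta c(\mathcal{G}_1,\mathcal{G}_2,\rho)=1$, which is also the maximum it can do by Lemma~\ref{lemma:lb_r_c}. Two observations about such a reversal will be used: (i) it only permutes the target edges of $C$ and the $\alpha$-labels carried by non-cut origin edges among $C_1$ and $C_2$, and creates no new labels; (ii) the parameters $x$ and $y$ of $\rho$ control, within each of the two cut intergenic stretches, on which side of the new origin edge a possible $\alpha$ lands and how the nucleotides are distributed between the two new origin edges.

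The key step — and the one I expect to be the main obstacle — is to choose the divergent pair $o_a,o_b$, together with the orientation of the cut, so that $C_1$ ends up \emph{clean}: every labeled target edge of $C$ is routed into the arc that becomes $C_2$, and the $\alpha$-labels of $o_a$ and $o_b$ (if present) are placed, via $x$ and $y$, on the $C_2$ side of the cut. I would establish this by a case analysis on the cyclic arrangement of the labeled edges of $C$ relative to its divergent pairs; when a given pair cannot isolate a clean arc, one switches to another divergent pair or flips which of the two arcs of $C$ is designated $C_1$, and divergence is what guarantees that at least one admissible choice exists. Pinning down ``at least one admissible choice always exists'' is the crux of the argument.

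Finally, with $C_1$ clean, I would pick $x$ and $y$ so that the total origin weight of $C_1$ equals its total target weight, making $C_1$ balanced and therefore good; if the adjustable range of $x,y$ is too small to reach equality, $C_1$ is a clean unbalanced cycle, and Lemma~\ref{lemma:unbal} supplies one indel that turns $C_1$ good. In both cases $R$ consists of one reversal and at most one indel. The reversal raises the cycle count by one, and the balancing indel is an insertion affecting only an intergenic region (or a virtual insertion on a target edge), hence changes neither $c$ nor $|\pi^S|$ (consistent with Lemma~\ref{lemma:lb_d_c}), so $\Delta c(\mathcal{G}_1,\mathcal{G}_2,R)=1$. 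Since the bad cycle $C$ is replaced by the good cycle $C_1$ together with the possibly bad cycle $C_2$, while every other cycle of $G(\mathcal{G}_1,\mathcal{G}_2)$ is untouched, the number of good cycles increases by exactly one, giving $\Delta c_g(\mathcal{G}_1,\mathcal{G}_2,R)=1$.
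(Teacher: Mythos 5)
The paper offers no proof of this statement---it is imported wholesale as Lemma~7 of the cited work on the Labeled Intergenic Breakpoint Graph---so your attempt must stand on its own. Your skeleton is the right mechanism: divergence gives a pair of origin edges of $C$ traversed in opposite directions, a reversal cutting that pair splits $C$ into $C_1$ and $C_2$ (hence $\Delta c = 1$, the maximum allowed by Lemma~\ref{lemma:lb_r_c}), and the optional indel is spent on $C_1$. The final accounting of $\Delta c$ and $\Delta c_g$ is also fine \emph{conditional on} your key claim.

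The gap is exactly where you flag it, and it is not merely an unverified step: the claim that some admissible choice of divergent pair and cut always yields a \emph{clean} $C_1$ is false. Consider a divergent cycle in which every target edge carries a label (every pair of genes adjacent along the cycle in $\iota$ has an exclusive gene of $\mathcal{G}_2$ between them). Any alternating cycle $C_1$ produced by the split contains at least one target edge and is therefore labeled, no matter which divergent pair you cut or which arc you designate as $C_1$; switching pairs or flipping arcs cannot help. Since your only fallback, Lemma~\ref{lemma:unbal}, applies to \emph{clean} unbalanced cycles, the argument collapses there. The repair is to aim lower: because $C$ is divergent, it has two origin edges of opposite traversal direction that are \emph{adjacent} in the cycle, joined by a single target edge $t$; cut that pair so that $C_1$ is the trivial cycle on $t$. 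Choosing $x$ and $y$ to route any $\alpha$ of the two cut origin edges into $C_2$ and to give the new origin edge of $C_1$ weight at most $w(t)$ makes $C_1$ trivial, non-negative, with a clean origin edge. Then either $C_1$ is already good, or Lemma~\ref{lemma:triv_1d} turns it good with a single indel---and that lemma, unlike your clean-$C_1$ route, covers the case of a labeled target edge, because the one insertion supplies the missing gene and the missing nucleotides simultaneously.
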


\vspace{8pt}
\begin{lemma}\cite[Lemma 8]{2021b-alexandrino-etal}\label{lemma:good_div_1r}
If there is in $G(\mathcal{G}_1,\mathcal{G}_2)$ a good divergent cycle $C$, then there exists one reversal, such that $\Delta c(\mathcal{G}_1,\mathcal{G}_2,R) = 1$ and $\Delta c_g(\mathcal{G}_1,\mathcal{G}_2,R) = 1$.
\end{lemma}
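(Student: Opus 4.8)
\emph{Proof proposal.} The plan is to exhibit a single reversal $\rho$ that splits the good divergent cycle $C$ into two cycles $C_1$ and $C_2$ that are again good, i.e.\ balanced and clean. Since $C$ was itself good and a reversal leaves the gene content of $\mathcal{G}_1$ (hence $|\pi^S|$) and every cycle other than $C$ untouched, this immediately gives $\Delta c(\mathcal{G}_1,\mathcal{G}_2,(\rho)) = c(\mathcal{G}_1\cdot\rho,\mathcal{G}_2) - c(\mathcal{G}_1,\mathcal{G}_2) = 1$ and $\Delta c_g(\mathcal{G}_1,\mathcal{G}_2,(\rho)) = c_g(\mathcal{G}_1\cdot\rho,\mathcal{G}_2) - c_g(\mathcal{G}_1,\mathcal{G}_2) = 2 - 1 = 1$, as required.

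First I would recall the purely combinatorial fact underlying reversal distance in the signed setting and its intergenic analogue: a divergent cycle always contains a pair of origin edges $o_p,o_q\in E^o(C)$ such that reversing the genome segment delimited by the intergenic regions of $o_p$ and $o_q$ replaces those two origin edges by two new origin edges and splits $C$ into exactly two alternating cycles, leaving all target edges in place. This is precisely where the hypothesis that $C$ is divergent (rather than convergent) is used; it is the analogue of the classical statement that an oriented cycle of the breakpoint graph admits a splitting reversal. Because a reversal never touches target edges and only recombines origin edges, and because $C$ is clean (all its origin edges are label-free), the two resulting cycles $C_1$ and $C_2$ are automatically clean.

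The substantive step is to choose the two nucleotide cut points $x\in[0,w(o_p)]$ and $y\in[0,w(o_q)]$ of the reversal so that $C_1$ and $C_2$ are balanced. The reversal produces new origin edges of weights $x+y$ and $(w(o_p)-x)+(w(o_q)-y)$, so the weight $s=x+y$ assigned to the new origin edge of $C_1$ can be made equal to any prescribed value in $[0,w(o_p)+w(o_q)]$. Writing $\tau_1$ for the total target weight inherited by $C_1$ and $\sigma_1$ for the total weight of the old origin edges (those other than $o_p,o_q$) inherited by $C_1$, the cycle $C_1$ is balanced exactly when $s=\tau_1-\sigma_1$; and since $C$ was balanced, the analogous equation for $C_2$ is then satisfied automatically (the two balance conditions sum to the trivial identity $w(o_p)+w(o_q)=w(o_p)+w(o_q)$). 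Hence it suffices to guarantee that $0\le \tau_1-\sigma_1\le w(o_p)+w(o_q)$.

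Establishing this last inequality is the step I expect to be the main obstacle: a divergent cycle may admit several splitting reversals, and not every one of them distributes the target and origin weights so that the required $s=\tau_1-\sigma_1$ lands in the feasible window. The plan is to make the choice of the divergent pair $(o_p,o_q)$ carefully — for instance selecting it so that one of the two arcs of $C$ that the reversal separates is as small as possible (ideally a single target edge, so that $\sigma_1=0$ and the condition collapses to $\tau_1\le w(o_p)+w(o_q)$), or, failing that, running an extremal/continuity argument over the admissible divergent pairs of $C$ to show that at least one of them yields a value of $\tau_1-\sigma_1$ inside $[0,w(o_p)+w(o_q)]$. Once such a reversal is fixed and its cut points are set so that $x+y=\tau_1-\sigma_1$, both $C_1$ and $C_2$ are balanced and clean, hence good, which completes the argument.
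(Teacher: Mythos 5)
The paper does not actually prove this lemma; it imports it as Lemma~8 of the cited prior work, so there is no in-paper proof to compare against. That said, your architecture is the standard one for this kind of statement: cut a divergent pair of origin edges $o_p,o_q$ of $C$ with a reversal, note that target edges and labels are untouched so both offspring cycles are clean, and use the freedom $s=x+y\in[0,w(o_p)+w(o_q)]$ to balance one offspring, the other then being balanced automatically since $C$ is. Your accounting of $\Delta c$ and $\Delta c_g$ is also correct.

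The genuine gap is the one you flag yourself: you never establish that \emph{some} divergent pair satisfies $0\le\tau_1-\sigma_1\le w(o_p)+w(o_q)$, and your first fallback does not close it. A divergent pair must consist of one origin edge traversed right-to-left and one traversed left-to-right, so you cannot in general pick the pair so that one arc is a single target edge (the two origin edges flanking a given target edge may well be traversed in the same direction); and even when you can, that target edge's weight may exceed $w(o_p)+w(o_q)$, so the condition does not collapse to anything automatic. The claim is nevertheless true, and your second, vaguer suggestion (an extremal/continuity argument) can be made precise as follows. Let $D_j$ be the running sum of target-weight minus origin-weight along the cycle just before $o_j$ is traversed, and set $I_j=[D_j-w(o_j),\,D_j]$; a short computation shows the pair $(o_p,o_q)$ is feasible exactly when $I_p\cap I_q\ne\emptyset$. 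Because $C$ is balanced, the $I_j$ are the descending steps of a closed walk on $\mathbb{R}$, so their union is the single interval between the walk's minimum and maximum; since both direction classes of origin edges are nonempty and a connected interval cannot be written as the disjoint union of two nonempty closed sets, some interval from each class must meet, which yields the required divergent pair. Without an argument of this kind, the proof as written is incomplete.
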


\vspace{8pt}
\begin{lemma}\label{lemma:bad_nodiv_3r_2d}
If there is no divergent cycles in $G(\mathcal{G}_1,\mathcal{G}_2)$ and there are still nontrivial or bad cycles, then there exists a sequence $R$ with $3$ reversals and up to $2$ indels, such that $\Delta c(\mathcal{G}_1,\mathcal{G}_2,R) = 1$ and $\Delta c_g(\mathcal{G}_1,\mathcal{G}_2,R) = 2$.
\end{lemma}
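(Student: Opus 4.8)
The plan is to use the hypothesis that no cycle is divergent (so every cycle of $G(\mathcal{G}_1,\mathcal{G}_2)$ is convergent), isolate a single offending nontrivial convergent cycle $C$, manipulate it with reversals in the style of the Hannenhalli--Pevzner ``hurdle clearing'' moves and their intergenic adaptations, and then repair the leftover cycles with the indel moves already available (Lemmas~\ref{lemma:unbal}, \ref{lemma:triv_1d}, \ref{lemma:triv_2d}). The goal is a sequence with (at most) three reversals whose net effect on the cycle count is exactly $+1$, together with at most two indels whose combined effect, with the reversals, is $\Delta c_g = 2$.

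First I would fix the target cycle. A configuration consisting only of trivial cycles has $|\pi^S|+1-c=0$, and no operation can make $|\pi^S\cdot R|+1-c$ negative, so $\Delta c=1$ is unattainable there; hence the hypothesis forces a nontrivial cycle to exist, and it is convergent, say $C$. I would then recall the key local fact: since all origin edges of a convergent cycle are traversed from right to left, a single reversal acting on two origin edges of $C$ keeps it convergent and can change its orientation type without changing the number of cycles, while a reversal acting on one origin edge of $C$ and one of another cycle merges the two into one cycle.

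Second, the reversal phase. If $C$ is non-oriented, the first reversal $r_1$ is a ``bad'' reversal: either it acts on two origin edges inside $C$, turning $C$ oriented with $\Delta c(r_1)=\Delta c_g(r_1)=0$; or, when some other cycle is present, it merges that cycle into $C$ producing an oriented cycle with $\Delta c(r_1)=-1$. If $C$ is already oriented, $r_1$ is not needed (or is used analogously). The remaining reversals are ``good'' reversals splitting an oriented cycle into two, each with $\Delta c=+1$; here I would invoke the intergenic split lemmas of~\cite{2021a-oliveira-etal,2023b-alexandrino-etal} to position the cut nucleotides so that one of the two pieces is balanced (hence good once its label is cleared). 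The three reversals are arranged so their net effect on $c$ is exactly $+1$ --- one merge ($-1$) cancelled by two splits ($+1$ each), or one orienting reversal ($0$) followed by one split ($+1$) and a second orienting reversal of the leftover nontrivial piece ($0$) --- and so that at least two produced pieces are clean and either already good or only unbalanced, with at most one trivial piece carrying a label. In the third phase, Lemma~\ref{lemma:unbal} balances each of the at most two clean unbalanced pieces (one indel each, $\Delta c=0$, $\Delta c_g=+1$), and Lemma~\ref{lemma:triv_1d} or Lemma~\ref{lemma:triv_2d} clears a leftover trivial labeled piece; the sub-cases are organized so the indels number at most two and the $\Delta c_g$ contributions of reversals and indels sum to $2$ while $\Delta c$ stays at $1$.

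The main obstacle I expect is exactly this accounting: showing that three reversals can always be chosen so their net effect on $c$ is $+1$ (rather than $+2$) while still creating two cycles that become good after at most two indels, uniformly in the length, orientation type, balance, and labels of $C$, and regardless of whether a second cycle is available to merge with. In particular I would need to dispatch the awkward case of a lone clean balanced non-oriented nontrivial cycle occupying all vertices, for which these bounds look impossible --- I expect such a cycle is in fact forced to be oriented, or is preempted by an earlier good-cycle lemma of the algorithm --- and to verify in each sub-case that the split lemmas of~\cite{2021a-oliveira-etal,2023b-alexandrino-etal} genuinely allow the cut to be placed so as to balance the intended piece.
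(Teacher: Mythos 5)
Your proposal does not reach a proof. The paper's own argument is a two-line case split: if the offending convergent cycle is oriented it invokes Lemma~9 of~\cite{2022b-alexandrino-etal}, and if it is non-oriented it invokes Lemma~10 of the same work; those external lemmas contain the actual reversal constructions. You attempt to rebuild those constructions from scratch, but the decisive step --- that three reversals can always be arranged so that their net effect is $\Delta c=+1$ while leaving at most two pieces that at most two indels can turn good, uniformly over the length, balance, and labels of $C$ --- is exactly the part you defer, listing several alternative move patterns (``either\ldots or\ldots or is used analogously'') and then naming this accounting as ``the main obstacle I expect.'' That accounting \emph{is} the lemma; without it you have a strategy outline plus an acknowledgment that its hardest step is unverified.

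Two concrete problems in what you do commit to. First, the case you hope is ``preempted by an earlier good-cycle lemma'' --- a lone clean, balanced, non-oriented, convergent, nontrivial cycle --- is not preempted: the algorithm's earlier steps eliminate trivial bad cycles, unbalanced clean cycles, oriented cycles, and divergent cycles, but a \emph{good non-oriented convergent} nontrivial cycle survives all of those tests and arrives precisely at Step~VII, so this lemma must handle it (and the lemma is stated independently of the algorithm in any case); you cannot discharge it by appeal to earlier steps, and your own analysis suggests the stated equalities are delicate exactly there, which is where the cited Lemma~10 of~\cite{2022b-alexandrino-etal} would have to be consulted. Second, your opening inference is backwards: the hypothesis ``nontrivial or bad'' does not by itself guarantee a nontrivial cycle (all cycles could be trivial with one of them bad), and arguing that ``$\Delta c=1$ is unattainable there, hence a nontrivial cycle exists'' derives a restriction on the hypothesis from the conclusion you are trying to prove; the correct way to exclude that configuration is via Step~I of the algorithm (Lemmas~\ref{lemma:triv_1d} and~\ref{lemma:triv_2d}), not via this lemma's statement. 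To make your route work you would need to carry out, case by case, the reversal constructions that the paper imports wholesale from~\cite{2022b-alexandrino-etal}.
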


\begin{proof}
If $C$ is oriented, the result is direct from Lemma 9 from a previous work~\cite{2022b-alexandrino-etal}. If $C$ is non-oriented, the result is direct from Lemma 10 from the same work~\cite{2022b-alexandrino-etal}.
\end{proof}

\begin{algorithm}
\caption{The input is a pair of genomes $\mathcal{G}_1$ and $\mathcal{G}_2$. The output is a sequence $S''$ of reversal, transposition and indel operations such that $\mathcal{G}_1
  \cdot S'' = \mathcal{G}_2$.}\label{algorithm:RTI_APX}
\begin{algorithmic}
\State $G \gets G(\mathcal{G}_1,\mathcal{G}_2)$
\State $S \gets [~]$
\While{$G$ has a cycle that is not good and trivial.}
    \If{there is a trivial bad cycle in $G$.}
        \State $S' \gets [\delta]$ or $[\delta_1, \delta_2]$ \Comment{Step I: lemmas~\ref{lemma:triv_1d} or~\ref{lemma:triv_2d}.}
    \ElsIf{there is an unbalanced clean cycle in $G$.}
        \State $S' \gets [\delta]$ \Comment{Step II: Lemma~\ref{lemma:unbal}}
    \ElsIf{there is a good oriented cycle in $G$.}
        \State $S' \gets [\tau_1,\tau_2,\tau_3] \text{ or } [\tau,\delta_1,\delta_2]$ \Comment{Step III: lemmas~\ref{lemma:good_ori_3t} and~\ref{lemma:good_ori_1t_2d}}
    \ElsIf{there is a bad oriented cycle in $G$.}
        \State $S' \gets [\tau,\delta]$ \Comment{Step IV: Lemma~\ref{lemma:bad_ori_1t_1d}.}
    \ElsIf{there is a labeled divergent cycle in $G$}
        \State $S' \gets [\rho,\delta]$ \Comment{Step V: Lemma~\ref{lemma:bad_div_1r_1d}.}
    \ElsIf{there is a good divergent cycle in $G$}
        \State $S' \gets [\rho]$ \Comment{Step VI: Lemma~\ref{lemma:good_div_1r}.}
    \Else
        \State $S' \gets [\rho_1,\rho_2,\rho_3,\delta_1,\delta_2]$ \Comment{Step VII: Lemma~\ref{lemma:bad_nodiv_3r_2d}.}
    \EndIf
    \State $G \gets G \cdot S'$
    \State $S \gets S + S'$
\EndWhile \\
$S'' \gets S$ with the virtual insertions replaced by deletions at the end of the sequence.
\Return $S''$
\end{algorithmic}
\end{algorithm}

Let us consider the weight of an operation as a cost to apply it. To prove the approximation of Algorithm~\ref{algorithm:RTI_APX}, we use the following value to describe the variation in $\Delta c_g$ and $\Delta c$ per weight of the operations applied.
$$
    \Delta c c_g(p_1,p_2,\mathcal{G}_1,\mathcal{G}_2,R) = \frac{p_1 \Delta c_g(\mathcal{G}_1,\mathcal{G}_2,R) + p_2 \Delta c(\mathcal{G}_1,\mathcal{G}_2,R)}{W(R)}
$$

In this equation for $\Delta c c_g$, the parameters $p_1$ and $p_2$ control the contribution of $\Delta c_g$ and $\Delta c$ and are useful to adjust this equation depending on the weights we are considering for the operations.

In the following theorem, we show approximations for some choices of weights. The weights are motivated by the fact that transpositions cut the genome in more points, which makes them preferred by most algorithms.
To investigate how weight variations affect approximation performance, we also tested scenarios with equal weights for reversals and indels, as well as cases where reversals have higher weights compared to indels.

\vspace{8pt}
\begin{theorem}
Algorithm~\ref{algorithm:RTI_APX} ensures the following approximations for the IWRTID problem:
\begin{itemize}
    \item $3.33$, if $W(\rho) = 2$, $W(\tau) = 3$, $W(\delta) = 2$, $p_1 = 4$, and $p_2 = 1$.
    \item $2.67$, if $W(\rho) = 2$, $W(\tau) = 3$, $W(\delta) = 1$, $p_1 = 1$, and $p_2 = 1$.
    \item $2.5$, if $W(\rho) = 1$, $W(\tau) = 2$, $W(\delta) = 1$, $p_1 = 4$, and $p_2 = 1$.
    \item $2$, if $W(\rho) = 2$, $W(\tau) = 4$, $W(\delta) = 1$, $p_1 = 1$, and $p_2 = 1$.
\end{itemize}
\end{theorem}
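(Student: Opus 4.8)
The plan is to establish a lower bound on the optimal cost of any solution, then upper-bound the cost of the sequence produced by Algorithm~\ref{algorithm:RTI_APX}, and finally show their ratio is bounded by the claimed constants for each choice of weights. First I would fix parameters $p_1,p_2$ and define the potential $\Phi = p_1(|\pi^S|+1-c_g(\mathcal{G}_1,\mathcal{G}_2)) + p_2(|\pi^S|+1-c(\mathcal{G}_1,\mathcal{G}_2))$. By Remark~\ref{remark:cyc}, $\Phi = 0$ exactly when $\mathcal{G}_1 = \mathcal{G}_2$, and $\Phi > 0$ otherwise; each operation $\beta$ in any sequence decreases $\Phi$ by exactly $p_1\,\Delta c_g(\mathcal{G}_1,\mathcal{G}_2,\beta) + p_2\,\Delta c(\mathcal{G}_1,\mathcal{G}_2,\beta)$. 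Using Lemmas~\ref{lemma:lb_r_c}--\ref{lemma:lb_t_cg}, a single reversal drops $\Phi$ by at most $p_1 + p_2$, a single transposition by at most $2p_1 + 2p_2$, and a single indel by at most $p_1$ (since $\Delta c \le 0$ for indels, the $p_2$ term cannot help, and in fact for a deletion the drop is at most $p_1$ as well). Hence for any solution $R^*$ we get the lower bound $W(R^*) \ge \Phi(\mathcal{G}_1,\mathcal{G}_2) / M$, where $M = \max\{ (p_1+p_2)/W(\rho),\ (2p_1+2p_2)/W(\tau),\ p_1/W(\delta) \}$ is the best possible $\Delta cc_g$ value over single operations; one checks for each of the four weight settings which operation attains this maximum.

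Next I would bound the algorithm's output. Each iteration of the while loop applies one of the step blocks I--VII; for each block, Lemmas~\ref{lemma:triv_1d}--\ref{lemma:bad_nodiv_3r_2d} give the exact values of $\Delta c$ and $\Delta c_g$ and the exact operation multiset, hence the exact decrease in $\Phi$ and the exact cost $W(S')$. So each step has a well-defined ratio $W(S')/\Delta\Phi(S')$, and since $\Phi$ starts at $\Phi(\mathcal{G}_1,\mathcal{G}_2)$ and (by the termination condition together with Remark~\ref{remark:cyc}) reaches $0$, the total cost of $S$ is at most $\Phi(\mathcal{G}_1,\mathcal{G}_2) \cdot \max_{\text{steps}} W(S')/\Delta\Phi(S')$. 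The Lemma from Section~\ref{sec:bg} guarantees that replacing virtual insertions by deletions at the end does not increase the number of operations (and, since $W(\phi) = W(\psi) = W(\delta)$, does not increase the cost), so $W(S'') \le W(S)$. Combining, the approximation factor is at most $M \cdot \max_{\text{steps}} \bigl(W(S')/\Delta\Phi(S')\bigr)$, i.e. the worst step ratio divided by the best single-operation ratio. The theorem then follows by a finite case check: for each of the four $(W(\rho),W(\tau),W(\delta),p_1,p_2)$ tuples, compute $M$, compute the seven step ratios, take the maximum, multiply, and verify it equals $3.33$, $2.67$, $2.5$, or $2$ respectively.

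The main obstacle will be the bookkeeping in the per-step ratio computation and making sure Step III is handled correctly: it offers a choice between $[\tau_1,\tau_2,\tau_3]$ (cost $3W(\tau)$, $\Delta\Phi = 2p_1+2p_2$) and $[\tau,\delta_1,\delta_2]$ (cost $W(\tau)+2W(\delta)$, $\Delta\Phi = 2p_1+2p_2$), and the algorithm should pick whichever is cheaper for the given weights — so the step ratio is $\min\{3W(\tau),\,W(\tau)+2W(\delta)\}/(2p_1+2p_2)$. A similar subtlety is that Steps I--VII each have an "up to" number of indels, so I must verify the stated $\Delta c$/$\Delta c_g$ values are achieved even in the worst case (fewest indels gives smallest cost but the $\Delta$ values are as stated regardless, so the ratio is maximized by the version with the most indels — I should take that worst case). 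Once the step that maximizes $W(S')/\Delta\Phi(S')$ is identified for each weight profile (I expect it to be the transposition-heavy Step III or Step VII in most cases, except the last profile where $W(\tau)=4$ makes Step VII's three reversals competitive), the arithmetic is routine and the four bounds drop out.
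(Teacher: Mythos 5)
Your proposal is correct and follows essentially the same route as the paper: your quantity $M$ is the paper's $\Delta_{max}$, your per-step ratio $W(S')/\Delta\Phi(S')$ is the reciprocal of the paper's $\Delta cc_g$ for that step (including the $\min$ for Step III and the worst-case indel counts), and the final bound $M\cdot\max_{\text{steps}} W(S')/\Delta\Phi(S')$ is exactly the paper's $\max_K \Delta_{max}/\Delta_K$. The only difference is that you make the amortization explicit via the potential $\Phi$, which is a slightly more careful rendering of the paper's per-step comparison but not a different argument.
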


\begin{proof}
Considering Remark~\ref{remark:cyc} and the fact that at each iteration the algorithm finds a rearrangement sequence with positive value for $\Delta c$ or $\Delta c_g$, eventually, the algorithm stops producing the desired sequence $S''$, capable of transforming $\mathcal{G}_1$ into $\mathcal{G}_2$. We must prove that the weight of this sequence is at most the desire approximation factor times the optimal weight. For that goal, it is enough to prove that this fact is valid for each step.

For a reversal $\rho$, we know that $\Delta c c_g(p_1,p_2,\mathcal{G}_1,\mathcal{G}_2,(\rho)) \le \frac{p1+p2}{W(\rho)}$ (lemmas~\ref{lemma:lb_r_c} and~\ref{lemma:lb_r_cg}). Similarly, for a transposition $\tau$, $\Delta c c_g(p_1,p_2,\mathcal{G}_1,\mathcal{G}_2,(\tau)) \le \frac{2(p1+p2)}{W(\tau)}$ (lemmas~\ref{lemma:lb_t_c} and~\ref{lemma:lb_t_cg}), and for an indel $\delta$, $\Delta c c_g(p_1,p_2,\mathcal{G}_1,\mathcal{G}_2,(\delta)) \le \frac{p1}{W(\delta)}$ (lemmas~\ref{lemma:lb_d_c} and~\ref{lemma:lb_d_cg}). Consider $\Delta_{max} = max  \left (\frac{p1+p2}{W(\rho)}, \frac{2(p1+p2)}{W(\tau)}, \frac{p1}{W(\delta)} \right )$, which is the maximum possible value of $\Delta c c_g$ produce by a single operation. Note that if a sequence of operations reaches this value, then the sequence is optimal.

Let us look at the value of $\Delta cc_g(p_1,p_2,\mathcal{G}_1,\mathcal{G}_2,R)$ for each step of Algorithm~\ref{algorithm:RTI_APX}:
\begin{itemize}
    \item Step I: $\Delta_I=\Delta cc_g(p_1,p_2,\mathcal{G}_1,\mathcal{G}_2,R) = \frac{p_1}{2W(\delta)}$
    \item Step I: $\Delta_I=\Delta cc_g(p_1,p_2,\mathcal{G}_1,\mathcal{G}_2,R) = \frac{p_1}{W(\delta)}$
    \item Step III: $\Delta_{II}=\Delta cc_g(p_1,p_2,\mathcal{G}_1,\mathcal{G}_2,R) = \frac{2p_1 + 2p_2}{min(3W(\tau),W(\tau) + 2W(\delta))}$
    \item Step IV: $\Delta_{III}=\Delta cc_g(p_1,p_2,\mathcal{G}_1,\mathcal{G}_2,R) = \frac{p_1 + 2p_2}{W(\tau) + W(\delta)}$
    \item Step V: $\Delta_{IV}=\Delta cc_g(p_1,p_2,\mathcal{G}_1,\mathcal{G}_2,R) = \frac{p_1 + p_2}{W(\rho) + W(\delta)}$
    \item Step VI: $\Delta_{IV}=\Delta cc_g(p_1,p_2,\mathcal{G}_1,\mathcal{G}_2,R) = \frac{p_1 + p_2}{W(\rho)}$
    \item Step VII: $\Delta_V=\Delta cc_g(p_1,p_2,\mathcal{G}_1,\mathcal{G}_2,R) = \frac{2p_1 + 2p_2}{3W(\rho) + 2W(\delta)}$
\end{itemize}

For each step $K$, the applied sequence is within a factor of $\frac{\Delta_{max}}{\Delta_k}$ from the optimal. Consequently, the approximation factor of algorithm~\ref{algorithm:RTI_APX} is $max( \frac{\Delta_{max}}{\Delta_I}$, $\frac{\Delta_{max}}{\Delta_{II}}$, $\frac{\Delta_{max}}{\Delta_{III}}$, $\frac{\Delta_{max}}{\Delta_{IV}}$, $\frac{\Delta_{max}}{\Delta_{V}}$, $\frac{\Delta_{max}}{\Delta_{VI}},\frac{\Delta_{max}}{\Delta_{VII}})$. By replacing the values for $W(\rho)$, $W(\tau)$, $W(\delta)$, $p_1$, and $p_2$, we reach the informed approximations.
\end{proof}

\section{Conclusion}\label{sec:conc}

In this work, we investigated the genome rearrangement problem considering intergenic regions represented by their lengths and allowing reversal, transposition, and indel operations. We developed approximation algorithms with factors between $2$ and $3.33$ for distinct weights attributed to each operation.

Our contribution represents a meaningful step toward developing algorithms for rearrangement distances that consider operation weights. The main novelty of our approach is the integration of intergenic regions and indels in such problems. Nevertheless, as an initial effort, our study remains restricted to single-copy genes and a simplified representation of intergenic regions.

In future works, this problem can be explored with some alternative representation of intergenic regions or without the restriction of a single copy of each gene. It is also relevant to perform tests with real genomes to observe if the weights are helpful in the estimation of the evolutionary distance between them.

\section*{Acknowledgment} This work was supported by the S\~ao Paulo Research Foundation, FAPESP (grant
2021/13824-8
).
\bibliographystyle{sbc}
\bibliography{bibfile}

\end{document}